\newtheorem{conjecture}{ Conjecture}[section]
\newtheorem{theorem}[conjecture]{ Theorem}
\newtheorem{lemma}[conjecture]{ Lemma}
\newtheorem{corollary}[conjecture]{ Corollary}
\newtheorem{proposition}[conjecture]{ Proposition}
\theoremstyle{definition}
\newtheorem{example}[conjecture]{ Example}
\begin{document}

\title{Theta palindromes in theta conjugates}

\author{Kalpana Mahalingam, Palak Pandoh, Anuran Maity}

\address
	{Department of Mathematics,\\ 
	 Indian Institute of Technology Madras, 
	  Chennai, 600036, India}
	  \email{kmahalingam@iitm.ac.in,palakpandohiitmadras@gmail.com, anuran.maity@gmail.com}
	
%ramar@iitm.ac.in

%
% NB: a more complex sample for affiliations and the mapping to the
% corresponding authors can be found in the file "llncs.dem"
% (search for the string "\mainmatter" where a contribution starts).
% "llncs.dem" accompanies the document class "llncs.cls".
%

\keywords{Theoretical DNA computing, DNA encodings, Combinatorics of words,
Palindromes, Watson–Crick palindromes, Conjugacy}
\begin{abstract}
A DNA string is a Watson-Crick (WK) palindrome when the complement of its reverse is equal to itself. The Watson-Crick mapping $\theta$ is an involution that is also an antimorphism. $\theta$-conjugates of a word is a generalisation of conjugates of a word that incorporates the notion of WK-involution $\theta$. In this paper,
we study the distribution of palindromes and Watson-Crick palindromes, also known as $\theta$-palindromes among both the set of conjugates and $\theta$-conjugates of a word $w$. We also consider some general properties of the set $C_{\theta}(w)$, i.e., the set of $\theta$-conjugates of a word $w$, and characterize words $w$ such that $|C_{\theta}(w)|=|w|+1$, i.e., with the maximum number of elements in  $C_{\theta}(w)$. We also find the structure of words that have at least one (WK)-palindrome in $C_{\theta}(w)$.
\end{abstract}
\maketitle

\section{Introduction}
The study of sequences have applications in numerous fields such as biology, computer science, mathematics, and physics. DNA  molecules, which  carry the genetic information in almost all organisms, play an important role in molecular biology (see \cite{s2,s1,s3,s4}).
DNA computing experiments use 
information-encoding strings that possess  Watson-Crick complementarity property  
between DNA single-strands which allows information-encoding strands to potentially
interact. Formally, the Watson-Crick complementarity property on strings over $\Sigma$ is  an involution $\theta$  with
the additional property that $\theta(uv)=\theta(v)\theta(u)$ for all strings $u,v \in  \Sigma^*$ where $\theta$ is an involution, i.e., $\theta^2$ equals the identity.

The notion of $\theta$-palindrome was defined in \cite{watson} to study  palindromes from the perspective of DNA computing. It was defined
independently in \cite{LUCA2006282},  where closure operators for $\theta$-palindromes were considered. The classical results on  conjugacy and commutativity of words are present in \cite{Shyr}.  In \cite{CZEIZLER2010617}, the authors
study the properties of $\theta$-primitive
words. They prove
the existence of a unique $\theta$-primitive root of a given word, and provided some constraints
under which two distinct words  share their $\theta$-primitive root.  The combinatorial properties of strings in connection to partial words were investigated in \cite{BLANCHETSADRI2002297}. The  notions of conjugacy
and commutativity was generalized to incorporate the notion of
Watson-Crick complementarity of DNA
single-strands in \cite{watson}. The authors define and study properties of Watson-Crick conjugate
and commutative words, as well as Watson-Crick palindromes. They provide a complete characterization of the set of all words
that are not Watson-Crick palindromes. Some properties that link the
Watson-Crick palindromes to classical notions such as that of primitive words are established in \cite{Kari2010}. The authors show
that the set of Watson-Crick-palindromic words that cannot be written as the product of two non-empty
Watson-Crick-palindromes equals the set of primitive Watson-Crick-palindromes.

In this paper, we extend the notion  of palindromes in conjugacy class of a word to  Watson-Crick palindromes and Watson-Crick conjugates of a word. The number of palindromes in the conjugacy class of a word is studied in \cite{2015arXiv150309112G}. We  investigate the set of $\theta$-conjugates of a word. We study the number of Watson-Crick palindromes in a conjugacy class. We then consider the number of palindromes and Watson-Crick palindromes in the Watson-Crick conjugacy set of a given word.

The paper is organised as follows. In Section \ref{sec-3}, we study the properties of the set of $\theta$-conjugates of a word. We first show that for a given word $w$, the maximum number of elements in the $\theta$-conjugacy of a word is $|w|+1$, and we also characterize the words that attain this maximum number. In Section \ref{sec-4}, we study the distribution of $\theta$-palindromes in the conjugacy class of a word. We show that the conjugacy class of a word can contain at most two distinct palindromes. In Section \ref{sec-5}, we study the number of palindromes in the set of $\theta$-conjugates of the word. We find the structure of the words which have at least one palindrome among its $\theta$-conjugates. Lastly, in  Section \ref{sec-5}, we analyse the number of $\theta$-palindromes in the set of  $\theta$-conjugates of a word. We find the structure of the words which have at least one $\theta$-palindrome among its $\theta$-conjugates. We end the paper with some concluding remarks.

\section{Basic definitions and notations}\label{sec2}
An alphabet $\Sigma$ is a finite non-empty set of symbols. A word over $\Sigma$ is defined to be a finite sequence of symbols from $\Sigma$. $\Sigma^{*}$ denotes the set of all words over $\Sigma$ including the empty word $\lambda$ and $\Sigma^{+}=\Sigma^* \setminus \lambda$. The length of a word  $w \in \Sigma^{*}$ is the number of symbols in a word and is denoted by $|w|$. The reversal of $w=a_{1}a_{2} \cdots a_{n}$ is defined to be a string $w^{R}=a_{n} \cdots a_{2} a_{1}$ where $a_{i} \in \Sigma$. $Alph(w)$ denotes the set of all sub-words of $w$ of length $1$. A word $w$ is said to be a palindrome if $w=w^{R}$.

A word $w\in \Sigma^+$ is called \textit{primitive} if $w = u^i$ implies $w = u$ and $i = 1$.
Let $Q$ denote the set of all primitive words. 
For every word $w \in \Sigma^+$, there exists a unique word $\rho(w)\in \Sigma^+$, called  the {\it primitive root} of $w$,  such that  $\rho(w) \in Q$ and $w = \rho(w)^n $ for some $n \geq 1$. A function  $\theta:\Sigma^{*} \rightarrow \Sigma^{*}$ is said to be an \textit{antimorphism} if $\theta(uv)=\theta(v) \theta(u)$. The function $\theta$ is called an \textit{involution} if $\theta^{2}$ is an identity on $\Sigma^{*}$. 

A word $u \in \Sigma^*$ is said to be a factor of $w$ if $w=xuy$ where $x, y \in \Sigma^*$. If $x= \lambda$, then $u$ is a prefix of $w$ and if $y = \lambda$, then $u$ is a suffix of $w$. A word $u \in \Sigma^*$ is a conjugate of $w \in \Sigma^*$ if there exists $v \in \Sigma^*$  such that
$uv = vw$. The set of all conjugates of $w$, denoted as $C(w)$, is the conjugacy class of $w$.
 A word $u$ is a $\theta$-conjugate of another word $w$ if $uv = \theta(v)w$ for some $v \in \Sigma^*$. The set of all $\theta$-conjugates of $w$ is denoted by $C_{\theta}(w)$.
 For an antimorphic involution $\theta$, a finite word $w$ is called a $\theta$-palindrome if $w = \theta(w)$.
Consider $\Sigma=\{a,b\}$ and an antimorphic involution $\theta$ such that $\theta(a)=b$ and $\theta(b)=a$. Then, the word $abab$ is a $\theta$-palindrome but not a palindrome. For all other concepts in formal language theory and combinatorics on words, the reader is referred to \cite{Hopcroft69,Lothaire97}.

Throughout the paper, we take $\theta$ to be an antimorphic involution over $\Sigma$.
\section{Conjugacy and theta-Conjugacy  of a word.}\label{sec-3}
In this section, we study the conjugacy class and the $\theta$-conjugacy set $C_\theta(w)$ for a word $w$. We show some  general properties of  the set $C_\theta(w)$. We also characterize words for which $|C_\theta(w)|=|w|+1$ which is the maximum number of  of $\theta$-conjugates that a word of length $|w|$ can have.

We recall the following result from \cite{watson} for $\theta$-conjugates of a word.
\begin{proposition}\label{k1}
Let $u$ be a $\theta$-conjugate of $w$. Then, for an antimorphic involution
$\theta$, there exists $x, y \in  \Sigma^*$ such that either $u = xy$ and $w = y\theta(x)$ or $w = \theta(u)$.
\end{proposition}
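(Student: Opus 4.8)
The plan is to unpack the definition of $\theta$-conjugacy directly and split into cases depending on the length of the word $v$ that witnesses the relation. By definition, $u$ is a $\theta$-conjugate of $w$ means there exists $v \in \Sigma^*$ with $uv = \theta(v)w$. I would first record the trivial edge case $v = \lambda$: then $u = w$, and since I want to land in one of the two stated alternatives, I note that I can take $x = u$, $y = \lambda$, giving $u = xy = u$ and $w = y\theta(x) = \theta(u)$, which is the second alternative $w = \theta(u)$. So the substantive work is for $v \neq \lambda$.

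For the main case, the key observation is a comparison of lengths. From $uv = \theta(v)w$ and the fact that $\theta$ is length-preserving (being an involution that is also an antimorphism, it permutes $\Sigma$, so $|\theta(v)| = |v|$), I get $|u| = |w|$. This means the factorization $uv = \theta(v)w$ aligns a common word of length $|u| + |v|$ in two ways: as $u$ followed by $v$, and as $\theta(v)$ followed by $w$. I would then compare the prefix $\theta(v)$ (of length $|v|$) against $u$ (of length $|u| = |w|$). The natural subdivision is whether $|v| \le |u|$ or $|v| > |u|$.

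When $|v| \le |u|$, the prefix $\theta(v)$ of the big word is a prefix of $u$, so write $u = \theta(v) x$ for some $x$; reading off the suffix from $uv = \theta(v)w$ forces $w = xv$. Setting $y := \theta(v)$, so that $v = \theta(y)$, I obtain $u = yx$ and $w = x\theta(y)$, which matches the first alternative after renaming (with the roles of $x,y$ swapped relative to the statement — I would choose the labeling so that it reads $u = xy$, $w = y\theta(x)$). The complementary case $|v| > |u|$ is handled symmetrically: now $u$ is a proper prefix of $\theta(v)$, and I would exploit the equation from the other end, comparing the suffix $w$ of length $|w| = |u|$ against the suffix $v$ of length $|v| > |u|$, to extract the factorization; applying $\theta$ (using $\theta(v)\theta(u)=\theta(uv)$ and involutivity) converts this into the desired form as well.

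The only real obstacle is bookkeeping: keeping the two overlapping factorizations of $uv = \theta(v)w$ aligned correctly and making sure that the antimorphic property $\theta(uv) = \theta(v)\theta(u)$ is applied in the right direction when I pass between $v$ and $\theta(v)$. I expect the $|v| > |u|$ subcase to be where sign-errors in the splitting are most likely, so I would verify it by applying $\theta$ to the whole equation $uv = \theta(v)w$ to get $\theta(v)\theta(u) = \theta(w)v$ and re-running the short-$v$ argument on this transformed identity, which should collapse the long case into the short one and confirm that both land in the first alternative.
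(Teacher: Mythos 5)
The paper itself gives no proof of this proposition---it is recalled from the cited Kari--Mahalingam paper---so your attempt can only be judged on its own terms. Your overall plan is the right one: unpack $uv=\theta(v)w$, use that $\theta$ is length-preserving to get $|u|=|w|$, and split on $|v|\le|u|$ versus $|v|>|u|$; your treatment of the short case is correct (write $u=\theta(v)x$, read off $w=xv$, relabel). But two steps as written are wrong. First, the edge case $v=\lambda$: there $w=u$, yet your witness $x=u$, $y=\lambda$ yields $w=y\theta(x)=\theta(u)$, which is false unless $u$ happens to be a $\theta$-palindrome. The correct witness is $x=\lambda$, $y=u$, giving $u=xy$ and $w=y\theta(x)=u$; alternatively, just note that $v=\lambda$ is already covered by your $|v|\le|u|$ argument.

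Second, and more substantively, your proposed verification of the long case fails. Applying $\theta$ to $uv=\theta(v)w$ gives $\theta(w)v=\theta(v)\theta(u)$, which exhibits $\theta(w)$ as a $\theta$-conjugate of $\theta(u)$ with the \emph{same} witness $v$; since $|v|>|u|=|\theta(w)|$ still holds, this is again the long case, so nothing collapses and the ``re-run the short argument'' check would not go through. The long case has to be finished directly, e.g.\ by the suffix comparison you gesture at: since $|w|=|u|<|v|$, $w$ is a proper suffix of $v$, say $v=zw$ with $z\in\Sigma^+$; substituting into $uv=\theta(v)w$ gives $uz=\theta(v)$, hence $zw=v=\theta(uz)=\theta(z)\theta(u)$, and comparing lengths forces $z=\theta(z)$ and $w=\theta(u)$. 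So the long case lands exactly in the second alternative $w=\theta(u)$ (which coincides with the first alternative only in the degenerate form $x=u$, $y=\lambda$), not in a genuine factorization $u=xy$, $w=y\theta(x)$ with $y\neq\lambda$ as your ``both land in the first alternative'' predicts---a sign the computation was not actually carried out. With these two repairs your argument is complete and is the standard overlap argument behind the cited result.
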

Thus, we can deduce that for a word $w$,
$$ C_{\theta}(w) = \{ \theta(v)u~:~ w=uv, ~u,v\in \Sigma^*\}$$
Also, Proposition \ref{k1} implies that the maximum number of elements in  $C_{\theta}(w)$ is $|w|+1$. It is also clear that if $w$ is a $\theta$-palindrome, then the maximum number of elements in  $C_{\theta}(w)$ is $|w|$. We illustrate the concept of $\theta$-conjugacy of a word with the help of an example and show that the number of elements in the $\theta$-conjugacy of a word $w$ may or may not reach the maximum.
\begin{example}\label{o1} Consider $\Sigma=\{a,b,c\}$ and $\theta$ such that $\theta(a)=b$, $\theta(b)=a,\;\theta(c)=c$. 
\begin{enumerate}
 \item If $w = aac$, then $C_{\theta}(w) = \{ aac, caa, cba, cbb \}$ and $|C_{\theta}(w)| = 4 = |w|+1$. Note that $aac$ is a primitive word that is neither a palindrome nor a $\theta$-palindrome.
    \item If $w = abb$, then $C_{\theta}(w) = \{ abb, aab, aaa \}$ and $|C_{\theta}(w)| = 3 < |w|+1 = 4$. Note that $abb$ is a primitive word that is neither a palindrome nor a $\theta$-palindrome.
       \item If $w = bccb$, then $C_{\theta}(w) = \{ bccb, abcc, acbc, accb, acca \}$ and $|C_{\theta}(w)| = 5 =|w|+1$. Note that $bccb$ is a palindrome.
    \item If $w = aba$, then $C_{\theta}(w) = \{ aba, bab, baa\}$ and $|C_{\theta}(w)| =3 <|w|+1$.  Note that $w$ is a palindrome.
    \item  If $w=ab$, then  $C_{\theta}(ab) = \{ ab, aa \}$ and  $|C_{\theta}(w)| = 2 =|w|$. Note that $w$ is a $\theta$-palindrome.
    \item 
    If $w = abcab$, then $C_{\theta}(w) = \{ abcab, aabca, ababc, abcaa \}$ and $|C_{\theta}(w)| = 4<|w|$.  Note that $w$ is a $\theta$-palindrome.
\item If $w=aaa$, then $C_{\theta}(w) = \{ aaa, baa, bba, bbb \}$ and  $|C_{\theta}(w)| = |w|+1 = 4$. Note that $w$ is not a primitive word but $|C_{\theta}(w)| = |w|+1$.
\end{enumerate}
\end{example}
 It is well known that the maximum number of elements in a conjugacy class of a word $w$ is $|w|$ and it is attained if $w$ is primitive. This is not true in general for $\theta$-conjugacy of $w$. It is clear from Example \ref{o1} that there are primitive words with the maximum number of elements in the set $C_{\theta}(w)$ to be less than $|w|$, equal to $|w|$ and equal to $|w|+1$.
It is also clear from Example \ref{o1} that the maximum number of elements in the set $C_{\theta}(w)$, i.e., $|w|+1$, can be attained by both primitive as well as a non-primitive word.

We now characterize words $w$ with exactly $|w|+1$ elements in $C_{\theta}(w)$. We first recall some general results from \cite{Schutz62}.
\begin{lemma} \label{4}
Let $u,v,w \in \Sigma^{+}$.
\begin{itemize}
     \item If $uv=vu$ then, $u$ and $v$ are powers of a common primitive word. 
     \item If $uv=vw$ then, for $k \geq 0$, $x \in \Sigma^{+}$ and $y \in \Sigma^{*}$,$u=xy$, $v=(xy)^{k}x$, $w=yx$.
 \end{itemize}
\end{lemma}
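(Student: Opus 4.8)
Both items are classical facts of Lyndon--Sch\"utzenberger type, so the plan is to prove each directly by an elementary length-and-periodicity argument rather than merely to cite them.

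For the first item I would induct on $|uv|$. Assuming without loss of generality that $|u|\le |v|$, comparing the length-$|u|$ prefixes of the two sides of $uv=vu$ shows that $u$ is a prefix of $v$; write $v=uz$ with $z\in\Sigma^{*}$. If $z=\lambda$ then $u=v$ and both are powers of $\rho(u)$. Otherwise, substituting $v=uz$ and cancelling the leading factor $u$ converts $uv=vu$ into $uz=zu$ with $|uz|=|v|<|uv|$, so the induction hypothesis yields a primitive word $p$ with $u,z\in p^{*}$, whence $v=uz\in p^{*}$ as well. The base case is immediate, so this settles the first item.

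For the second item the decisive preliminary step is to notice that $uv=vw$ forces $|u|=|w|$; set $d:=|u|=|w|$ and $z:=uv=vw$, a word of length $d+|v|$. I would then prove that $z$ has period $d$: for each position $j$ with $1\le j\le |v|$, reading $z=vw$ gives $z_{j}=v_{j}$ while reading $z=uv$ gives $z_{j+d}=v_{j}$, hence $z_{j}=z_{j+d}$. Because $|z|>d$ and the length-$d$ prefix of $z$ is exactly $u$, periodicity forces $z$ to be a prefix of the infinite power $u^{\omega}$, i.e.\ $z=u^{m}x$ with $m\ge 1$ and $x$ a proper (possibly empty) prefix of $u$. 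Writing $u=xy$ accordingly, I read $v$ off as the suffix $z_{[d+1\,..\,|z|]}=u^{m-1}x=(xy)^{m-1}x$ and $w$ off as the length-$d$ suffix of $z=(xy)^{m}x$, which is $yx$. Taking $k=m-1$ gives precisely $u=xy$, $v=(xy)^{k}x$, $w=yx$.

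The step I expect to be the main obstacle is not the periodicity computation itself but reconciling the resulting decomposition with the exact form asserted in the statement, where $x\in\Sigma^{+}$ and $y\in\Sigma^{*}$. My derivation naturally produces $y\ne\lambda$ with $x$ possibly empty, which is the opposite empty-word convention, so the degenerate case $x=\lambda$ (equivalently $z=u^{m}$, $v=u^{m-1}$, $w=u$) must be relabelled by taking the new $x$ to be $u$ and the new $y$ to be $\lambda$, adjusting $k$ to $m-2\ge 0$ (here $m\ge 2$ because $v\in\Sigma^{+}$). Carefully tracking these empty-word edge cases, and verifying that the stated ranges $k\ge 0$, $x\in\Sigma^{+}$, $y\in\Sigma^{*}$ are all realised, is where the bookkeeping lives; everything else is forced by the single periodicity observation.
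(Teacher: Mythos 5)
Your proof is correct, but note that the paper itself offers no proof of this lemma at all: it is recalled verbatim from Lyndon and Sch\"utzenberger \cite{Schutz62} as a known classical fact, so your self-contained argument goes strictly beyond what the paper does. What you supply is the standard textbook proof, and it holds up under scrutiny: for the commutation equation $uv=vu$, the length descent $v=uz$, $uz=zu$ with $|uz|<|uv|$ is well-founded because $u\in\Sigma^{+}$, and the base case $u=v$ is absorbed into the $z=\lambda$ branch; for the conjugacy equation $uv=vw$, your observation that $z:=uv=vw$ has period $d=|u|$ (via $z_{j}=v_{j}=z_{j+d}$ for $1\le j\le |v|$, which is nonvacuous since $v\in\Sigma^{+}$) forces $z=u^{m}x$ with $x$ a proper prefix of $u$, from which $u=xy$, $v=(xy)^{m-1}x$, $w=yx$ all read off correctly. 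Your flagged bookkeeping point is genuine and you resolve it properly: the periodic decomposition naturally yields $y\in\Sigma^{+}$ with $x$ possibly empty, whereas the paper's statement demands $x\in\Sigma^{+}$, $y\in\Sigma^{*}$, and your relabelling in the degenerate case $x=\lambda$ (taking the new $x=u$, $y=\lambda$, $k=m-2$, with $m\ge 2$ guaranteed by $v\in\Sigma^{+}$) is exactly the right fix and is internally consistent ($u=xy$, $v=u^{m-1}=(xy)^{m-2}x$, $w=u=yx$ all check out). In short: where the paper buys the lemma by citation, you buy it by a short periodicity argument, and the only care required --- the empty-word convention --- is precisely where you spent it.
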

We have the following result.
\begin{proposition}\label{b1}
Let $w \in \Sigma^*$. Then, $|C_{\theta}(w)| =|w|+1$ iff $w$ is not of the form $(\alpha \beta)^{i+1}\alpha v$ where $\alpha, v \in \Sigma^*$, $\beta \in \Sigma^+$, $i \geq 0$ and $\alpha,\; \beta$ are $\theta$-palindromes. 
\end{proposition}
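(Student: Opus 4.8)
The plan is to recognize that, since the excerpt already establishes $C_{\theta}(w)=\{\theta(v)u : w=uv,\ u,v\in\Sigma^*\}$ with one member per factorization, we have $|C_{\theta}(w)|=|w|+1$ exactly when the $|w|+1$ members are pairwise distinct. So I would characterize precisely when a \emph{collision} (two factorizations yielding the same $\theta$-conjugate) occurs, and show that a collision exists if and only if $w$ has the form $(\alpha\beta)^{i+1}\alpha v$; negating both sides then gives the proposition. To set up the collision condition, take two factorizations at positions $p<q$ and write $w=u\,s\,v'$ with $|u|=p$ and $|s|=q-p>0$, so the factorizations are $w=u(sv')$ and $w=(us)v'$, with $\theta$-conjugates $\theta(sv')u=\theta(v')\theta(s)u$ and $\theta(v')(us)$. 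Cancelling the common prefix $\theta(v')$ (equal length on both sides), the two $\theta$-conjugates coincide if and only if
\[
\theta(s)\,u = u\,s .
\]
Hence $|C_{\theta}(w)|<|w|+1$ iff there is a prefix $u$ of $w$ and a nonempty factor $s$ immediately following it with $\theta(s)u=us$.

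For the easy direction (the stated form forces a collision), suppose $w=(\alpha\beta)^{i+1}\alpha v$ with $\alpha,\beta$ $\theta$-palindromes and $\beta\in\Sigma^+$. I would exhibit the collision explicitly by taking $u=(\alpha\beta)^{i}\alpha$ and $s=\beta\alpha\in\Sigma^+$, so that $us=(\alpha\beta)^{i+1}\alpha$ is a prefix of $w$. Since $\alpha,\beta$ are $\theta$-palindromes, $\theta(s)=\theta(\beta\alpha)=\theta(\alpha)\theta(\beta)=\alpha\beta$, and therefore $\theta(s)u=(\alpha\beta)^{i+1}\alpha=us$. This is a genuine collision, so $|C_{\theta}(w)|\le|w|$.

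For the harder direction (a collision forces the form), I would start from $\theta(s)u=us$ with $s\in\Sigma^+$ and split on whether $u$ is empty. If $u=\lambda$, then $\theta(s)=s$, so $s$ is a nonempty $\theta$-palindrome and $w=sv'$ fits the form with $\alpha=\lambda$, $\beta=s$, $i=0$. If $u\in\Sigma^+$, apply Lemma \ref{4} to $\theta(s)\,u=u\,s$ to get $\theta(s)=xy$, $u=(xy)^{k}x$, $s=yx$ with $k\ge0$, $x\in\Sigma^+$, $y\in\Sigma^*$. Then $\theta(s)=\theta(yx)=\theta(x)\theta(y)$ while also $\theta(s)=xy$, so $\theta(x)\theta(y)=xy$; matching the prefix of length $|x|=|\theta(x)|$ gives $\theta(x)=x$ and hence $\theta(y)=y$, i.e.\ both $x$ and $y$ are $\theta$-palindromes. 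Substituting, $w=us v'=(xy)^{k}x\cdot yx\cdot v'=(xy)^{k+1}x v'$, which is the form with $\alpha=x$, $\beta=y$, $i=k$ when $y\in\Sigma^+$, and with $\alpha=\lambda$, $\beta=x$, $i=k+1$ when $y=\lambda$ (so that $w=x^{k+2}v'$).

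The main obstacle is bookkeeping rather than a deep difficulty: the proposition fixes $\beta\in\Sigma^+$ with $\alpha\in\Sigma^*$, whereas Lemma \ref{4} returns $x\in\Sigma^+$ with $y\in\Sigma^*$, so the asymmetry forces the degenerate cases $u=\lambda$ and $y=\lambda$ and the re-indexing $i=k+1$. I would take care to verify that each of these edge cases is genuinely absorbed into the single stated form (checking the index ranges $i\ge0$ and that the exhibited positions satisfy $0\le p<q\le|w|$), and that conversely no collision is missed, so that the equivalence ``no collision $\iff$ $w$ is not of the stated form'' is exactly the asserted characterization.
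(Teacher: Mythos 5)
Your proof is correct and takes essentially the same route as the paper's: both reduce a collision between two $\theta$-conjugates (after cancelling the common prefix $\theta(v')$) to the equation $\theta(s)u=us$, apply Lemma \ref{4} together with length-preservation of $\theta$ to force $\alpha,\beta$ to be $\theta$-palindromes, and prove the converse by exhibiting the same pair of coinciding $\theta$-conjugates. Your explicit treatment of the degenerate cases $u=\lambda$ and $y=\lambda$ is somewhat more careful than the paper's application of Lemma \ref{4}, but it is the same argument.
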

\begin{proof}
Let $w \in \Sigma^*$. We prove that  $|C_{\theta}(w)| < |w|+1$ iff $w=(\alpha \beta)^{i+1}\alpha v$ where $\alpha, v \in \Sigma^*$, $\beta \in \Sigma^+$, $i \geq 0$ and and $\alpha,\; \beta$ are $\theta$-palindromes.\\
Let $|C_{\theta}(w)| < |w|+1$, then at least two $\theta$-conjugates of $w$ are equal. Let $\theta(v)u$ and $\theta(y)x$ be the two $\theta$-conjugates of $w$ that are equal where $u, v, x, y \in \Sigma^*$ for $w= uv = xy$. Without loss of generality, let us assume that $|u|>|x|$. Then, $|v|<|y|$, $ u = x y_1$  and $y = y_1 v$ for some $y_1 \in \Sigma^+$.
Now, $$\theta(v)u = \theta(y)x\implies  \theta(v)xy_1 = \theta(y_1 v)x \implies \theta(v)xy_1 = \theta(v) \theta(y_1)x  \implies xy_1 = \theta(y_1)x.$$ Then by Lemma \ref{4}, $\theta(y_1) = \alpha \beta$, $x = (\alpha \beta)^i\alpha$ where $\beta \in \Sigma^+$, $\alpha \in \Sigma^*, i \geq 0$ such that $\alpha$ and $\beta$ are $\theta$-palindromes. 
Hence, $w = uv =xy= (\alpha \beta)^{i+1}\alpha v $ and $\alpha,\; \beta$ are $\theta$-palindromes. 
%Also $w = xy = (\alpha \beta)^i\alpha \theta(\alpha \beta) v = (\alpha \beta)^i\alpha \beta \alpha v = (\alpha \beta)^{i+1}\alpha v$.
\\
Conversely, let $w = (\alpha \beta)^{i+1}\alpha v$ where $\alpha, v \in \Sigma^*$, $\beta \in \Sigma^+$, $i \geq 0$ and $\alpha,\; \beta$ are $\theta$-palindromes. Now, $w' = \theta(v) (\alpha \beta)^{i+1}\alpha \in C_{\theta}(w)$ and  $w'' =  \theta(\beta \alpha v) (\alpha \beta)^i \alpha \in C_{\theta}(w)$. Consider $$w'' = \theta(v) \theta(\beta \alpha) (\alpha \beta)^i \alpha  = \theta(v) \alpha \beta (\alpha \beta)^i \alpha = \theta(v) (\alpha \beta)^{i+1} \alpha = w'.$$ Therefore, $|C_{\theta}(w)| < |w|+1$.
\end{proof}

The conjugacy  operation on words is an equivalence relation. However,  the $\theta$-conjugacy on words is not an equivalence relation. Note that from Example \ref{o1}, it is clear that $aaa$ is a $\theta$-conjugate of $abb$, but $abb$ is not a $\theta$-conjugate of $aaa$.  Thus, $\theta$-conjugacy on words is not a
symmetric relation, and hence, not an
equivalence relation.\\ \\
We recall the following from \cite{Shyr}.
\begin{lemma}\label{2}
Let $u = z^{i}$ for a primitive word $z$ over $\Sigma$. Then, the  conjugacy class of $u$  contains exactly  $|z|$  words.
\end{lemma}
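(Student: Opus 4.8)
The plan is to identify the conjugacy class $C(u)$ with the set of cyclic rotations of $u$ and then count how many of these are distinct. First I would recall that, by the second part of Lemma \ref{4}, the relation $uv = vw$ forces $u = xy$ and $w = yx$ for some factorization; hence $C(u) = \{yx : u = xy,\ x,y \in \Sigma^*\}$ is precisely the set of cyclic rotations of $u$. Writing $n = |u| = i|z|$, every rotation of $u$ is a rotation by some amount $t$ with $0 \le t \le n-1$, so there are at most $n$ conjugates; the content of the lemma is that exactly $|z|$ of them are distinct.

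Next I would exploit the periodicity coming from $u = z^i$. Rotating $u$ by $|z|$ positions sends $z^i = z \cdot z^{i-1}$ to $z^{i-1} \cdot z = z^i = u$, so rotation by $|z|$ fixes $u$. Consequently the rotation by $t$ and the rotation by $t + |z|$ always coincide, and every conjugate already appears among the rotations by $t = 0, 1, \ldots, |z|-1$. This establishes the upper bound $|C(u)| \le |z|$.

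For the lower bound I would show these $|z|$ rotations are pairwise distinct, equivalently that no rotation by $t$ with $0 < t < |z|$ fixes $u$. Suppose rotation by such a $t$ fixes $u$; writing $u = xy$ with $|x| = t$, the rotated word is $yx$, so $xy = yx$. By the first part of Lemma \ref{4}, $x$ and $y$ are powers of a common primitive word $z'$, whence $u = xy$ is itself a power of $z'$. By uniqueness of the primitive root $\rho(u)$ we must have $z' = z$, so $|z|$ divides $|x| = t$; but $0 < t < |z|$ makes this impossible. Hence, for $0 \le j < k \le |z|-1$, the rotations by $j$ and $k$ can agree only if rotation by $k-j$ fixes $u$, i.e.\ only if $|z| \mid (k-j)$, which is excluded. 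The rotations by $0, 1, \ldots, |z|-1$ are therefore all distinct, and combined with the upper bound this gives $|C(u)| = |z|$.

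I expect the only delicate point to be the distinctness argument: one must resist concluding directly that $t$ is a period of $u$ and instead pass through the commutation $xy = yx$, invoke the first part of Lemma \ref{4}, and then use uniqueness of the primitive root to pin down $z' = z$. Everything else is routine bookkeeping about cyclic shifts and their composition.
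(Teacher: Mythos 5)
Your proof is correct. Note that the paper does not prove this lemma at all---it is recalled from \cite{Shyr} without proof---so there is no internal argument to compare against; your rotation-counting proof is the standard one, and each step checks out: the identification of $C(u)$ with cyclic rotations via the second part of Lemma \ref{4}, the upper bound from $\sigma_{|z|}$ fixing $z^i$, and the distinctness argument, where you rightly route through $xy=yx$, the first part of Lemma \ref{4}, and uniqueness of the primitive root rather than asserting directly that $t$ is a period. One small point worth making explicit: reducing ``rotations by $j$ and $k$ agree'' to ``rotation by $k-j$ fixes $u$'' uses the fact that rotations of a fixed word compose additively modulo $|u|$, which holds and makes the reduction legitimate.
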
 
We show that Lemma \ref{2} do not hold for $\theta$-conjugacy. Infact we show that for any word $z$, the number of $\theta$-conjugates in  $z^i$ may not be equal to that of  the number of $\theta$-conjugates in the word $z$. We illustrate with the help of  an example.
\begin{example}
Let $\Sigma=\{a, b,c\}$, and  $\theta$ such that $\theta(a)=b, \theta(b)=a, \theta(c)=c$. Then,
\begin{enumerate}
\item $C_{\theta}(ac) =  \{ ac, ca, cb \}$;
    \item $C_{\theta}((ac)^2) =  \{ acac, caca, cbac, cbca, cbcb \}$; \item $C_{\theta}((ac)^{3}) =  \{ acacac, cacaca, cbacac, cbcaca, cbcbac, cbcbca, cbcbcb \}.$ 
\end{enumerate} 
\end{example}
%\begin{proposition}(\cite{Kari2010})\label{p1}
% Let u, v $\in$ $\Sigma^+$ such that $u$ $~\theta$-commutes with
%$v$, i.e., $uv = \;\theta(v)u$.
%If $\theta$ is an antimorphic involution, 
%$u = \alpha(\beta\alpha)^n$ 
%,\; $v = (\beta\alpha)^m$ where $\alpha,\beta$ are $\theta$ palindromes where $m\geq 1$, and $n\geq0$.
%\end{itemize}
%\end{proposition}
 
We show that the number of $\theta$-conjugates of a word $z^i$ are greater than the number of  $\theta$-conjugates of a word $z^j$ for $i>j$ and $z\in \Sigma^*$ and $|C_{\theta}(z)| \neq 1$. We have the following result.
\begin{lemma}\label{gg5}
Let $z \in \Sigma^*$. If $|C_{\theta}(z)| \neq 1$, then  $|C_{\theta}(z)| < |C_{\theta}(z^2)|$.
\end{lemma}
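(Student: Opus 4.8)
The plan is to describe $C_\theta(z^2)$ explicitly in terms of $C_\theta(z)$ and then exhibit one ``extra'' conjugate. Set $n=|z|$ and index the $\theta$-conjugates of $z$ by their split point: for $0\le k\le n$ write $z=u_kv_k$ with $|u_k|=k$ and put $c_k=\theta(v_k)u_k$, so that $C_\theta(z)=\{c_0,\dots,c_n\}$ by the formula $C_\theta(w)=\{\theta(v)u : w=uv\}$ recorded after Proposition \ref{k1}. I would then run through every split $z^2=UV$ and separate the split points into two ranges. When $|U|=k\le n$ one has $U=u_k$ and $V=v_kz$, giving the $\theta$-conjugate $\theta(v_kz)u_k=\theta(z)\theta(v_k)u_k=\theta(z)\,c_k$ (using that $\theta$ is an antimorphism); when $|U|=n+j$ with $0\le j\le n$ one has $U=zu_j$ and $V=v_j$, giving $\theta(v_j)\,z\,u_j$. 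Hence $C_\theta(z^2)=A\cup B$, where $A=\{\theta(z)c_k:0\le k\le n\}$ and $B=\{\theta(v_j)zu_j:0\le j\le n\}$, all words of length $2n$.

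The easy half is the inequality $|C_\theta(z^2)|\ge|C_\theta(z)|$. The map $c\mapsto\theta(z)c$ is injective, since one can cancel the common length-$n$ prefix $\theta(z)$, and its image is exactly $A$; therefore $|A|=|C_\theta(z)|$, and as $A\subseteq C_\theta(z^2)$ this already yields $|C_\theta(z^2)|\ge|C_\theta(z)|$. What remains, and where the hypothesis $|C_\theta(z)|\ne 1$ is used, is to produce a single $\theta$-conjugate of $z^2$ that does not lie in $A$.

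For this I would compare length-$n$ prefixes. Every element of $A$ begins with $\theta(z)$, whereas the length-$n$ prefix of the $B$-element $\theta(v_j)zu_j$ is $\theta(v_j)u_j=c_j$: its first $n$ letters consist of all of $\theta(v_j)$ (length $n-j$) followed by the length-$j$ prefix $u_j$ of $z$. Consequently, if some $c_j\ne\theta(z)$, then the corresponding element of $B$ has length-$n$ prefix different from $\theta(z)$ and so cannot belong to $A$, forcing $|C_\theta(z^2)|>|C_\theta(z)|$. To see that such a $j$ exists, note that if instead $c_j=\theta(z)$ for every $j$, then $C_\theta(z)=\{\theta(z)\}$ and $|C_\theta(z)|=1$, contradicting the hypothesis; hence some $c_j\ne\theta(z)$, which finishes the proof.

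The computations are routine; the point that must be gotten exactly right is the explicit splitting $C_\theta(z^2)=A\cup B$ together with the identification of the length-$n$ prefix of each ``second-half'' conjugate $\theta(v_j)zu_j$ with the conjugate $c_j$ of $z$. That prefix identity is the crux of the argument: it is precisely what converts the cardinality assumption $|C_\theta(z)|\ne 1$ into the existence of a conjugate of $z^2$ lying outside the injected copy $A$ of $C_\theta(z)$.
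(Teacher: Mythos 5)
Your proof is correct, and it pursues the same overall strategy as the paper's --- enumerate $C_\theta(z^2)$ via the splits of $z^2$, inject $C_\theta(z)$ into $C_\theta(z^2)$ in a way detectable from length-$|z|$ prefixes, and convert the hypothesis $|C_\theta(z)|\neq 1$ into one extra element --- but you implement it in mirror image. The paper injects through the second-half splits: it sends $w=\theta(v)u$ (with $z=uv$) to $w_1=\theta(v)u\,vu$, i.e.\ into your family $B$, so each injected element carries the conjugate of $z$ as its length-$|z|$ prefix; strictness then comes from the pair $\alpha_1=\theta(vuv)u$ and $\alpha_2=\theta(uv)uv$ (elements of your family $A$), which share the prefix $\theta(z)$ and whose coincidence for every split of $z$ would force $|C_\theta(z)|=1$. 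You instead inject through the first-half splits, $c\mapsto\theta(z)c$ onto $A$, and find the single extra witness in $B$: some $c_j\neq\theta(z)$ exists, and then $\theta(v_j)zu_j$ has length-$|z|$ prefix $c_j$ and so lies outside $A$. Your variant buys some cleanliness: $c\mapsto\theta(z)c$ is canonically defined and trivially injective by left cancellation, whereas the paper's $w\mapsto w_1$ depends a priori on a choice of split representing $w$ (harmless, but it has to be glossed), and you need only one extra witness rather than an argument that two specified elements with a common prefix must differ. The paper's version, in exchange, exhibits explicitly the pair of conjugates of $z^2$ whose potential collision is the only obstruction, which is slightly more informative about how repetitions in $C_\theta(z^2)$ arise. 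Your identity that the length-$|z|$ prefix of $\theta(v_j)zu_j$ is exactly $c_j$ is the precise dual of the prefix property the paper uses, and both proofs hinge on it.
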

\begin{proof}
Let $z\in\Sigma^*$ such that $|C_{\theta}(z)| \neq 1$. Let $w=\theta(v)u \in C_{\theta}(z)$ such that $z=uv$.  Now $z^2 = zz = uz'v$ where $z'=vu$. Then, $w_1 = \theta(v)uz' \in C_{\theta}(z^2)$. So for each element $w$ in $C_{\theta}(z)$, there exist an element $w_1$ in $C_{\theta}(z^2)$ such that $w$ is a prefix of $w_1$. Let $z = u_1v_1$. Then $w' = \theta(v_1)u_1 \in C_{\theta}(z)$. So there exist an element $w_1'= \theta(v_1)u_1z'' $ in $C_{\theta}(z^2)$ where $z''= v_1u_1$. Now if $w \neq w'$ then $w_1 \neq w_1'$.  Hence,  $|C_{\theta}(z)| \leq |C_{\theta}(z^2)|$.\\
Now $\theta(uv) \in C_{\theta}(z)$. Note that the words  $\alpha_1=\theta(vuv)u$ and $\alpha_2=\theta(uv)uv\in C_{\theta}(z^2)$ have $\theta(uv)$ as their prefix. If $\alpha_1 =\alpha_2$,  we get $\theta(v)u = uv$ such that $z=uv$ for all $u,v \in \Sigma^*$.
  Thus, we get $\theta(v)u=uv$ for all $u, v \in \Sigma^*$, so $C_{\theta}(z) = \{ uv \}$, i.e., $|C_{\theta}(z)|=1$  which is a contradiction.
  Therefore, there exist at least two distinct elements $\alpha_1$ and $\alpha_2$ in $C_{\theta}(z^2)$ whose prefix is $\theta(uv)$. Thus  $|C_{\theta}(z)| < |C_{\theta}(z^2)|$.
\end{proof}
We deduce an immediate result.
\begin{corollary}\label{gg3}
Let $z \in \Sigma^*$. If $|C_{\theta}(z)|\neq1$,  then 
$|C_{\theta}(z^i)| < |C_{\theta}(z^{i+1})|$ for $i\geq 1 $.
\end{corollary}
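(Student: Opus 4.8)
The plan is to reuse the mechanism of Lemma \ref{gg5}, but applied to the consecutive powers $z^i$ and $z^{i+1}$ rather than to $z$ and $z^2$. I first note that the tempting shortcut, namely substituting $z^i$ for $z$ in Lemma \ref{gg5}, only yields $|C_{\theta}(z^i)| < |C_{\theta}(z^{2i})|$, which skips all the intermediate powers; so the inequality for consecutive exponents has to be obtained by rerunning the injection-plus-surplus argument between $C_{\theta}(z^i)$ and $C_{\theta}(z^{i+1})$. Throughout I use that $|C_{\theta}(z)|\neq 1$ forces $z\neq\lambda$, hence $|z|\geq 1$.

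First I would build an injection of $C_{\theta}(z^i)$ into $C_{\theta}(z^{i+1})$ by a prefix construction. Given $w=\theta(v)u\in C_{\theta}(z^i)$ with $z^i=uv$, the factor $u$ is a prefix of $z^{i+1}$ and $v$ is a suffix of $z^{i+1}$, because $z^i$ is both a prefix and a suffix of $z^{i+1}$; since $|u|+|v|=i|z|\leq (i+1)|z|$ these two occurrences do not overlap, so $z^{i+1}=u\,z_0\,v$ for a unique $z_0$ with $|z_0|=|z|$. The word $\theta(v)\,u z_0$ is then a $\theta$-conjugate of $z^{i+1}$ having $w$ as its prefix of length $i|z|$. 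Choosing one such image for each $w$ gives a map whose prefix of length $i|z|$ recovers $w$, so it is injective, and consequently $|C_{\theta}(z^i)|\leq |C_{\theta}(z^{i+1})|$.

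To upgrade this to a strict inequality I would exhibit surplus elements exactly in the spirit of Lemma \ref{gg5}. Since $|C_{\theta}(z)|\neq 1$, there exist two distinct $\theta$-conjugates $c_1\neq c_2$ of $z$; writing $c_j=\theta(v_j)u_j$ with $z=u_jv_j$, the words $\theta(z^i)c_j$ are $\theta$-conjugates of $z^{i+1}$ via the factorisation $z^{i+1}=u_j\,(v_j z^i)$, they are distinct (the common prefix $\theta(z^i)$ cancels, leaving $c_1\neq c_2$), and they share the prefix $\theta(z^i)$ of length $i|z|$. Because every element in the image of the injection carries a distinct prefix of length $i|z|$, at most one of $\theta(z^i)c_1$ and $\theta(z^i)c_2$ can lie in that image; hence the injection fails to be surjective and $|C_{\theta}(z^i)|<|C_{\theta}(z^{i+1})|$ for every $i\geq 1$.

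I expect the surplus step to be the crux. The inequality becomes strict precisely because the hypothesis $|C_{\theta}(z)|\neq 1$ supplies two genuinely different short $\theta$-conjugates, and one must check that the shared prefix $\theta(z^i)$ has length exactly $i|z|$, so that the prefix-injectivity of the embedding really does forbid both surplus words from being images. By contrast the embedding step is routine once the prefix/suffix bookkeeping for $z^i$ inside $z^{i+1}$ is in place, and the distinctness $\theta(z^i)c_1\neq\theta(z^i)c_2$ reduces immediately to $c_1\neq c_2$.
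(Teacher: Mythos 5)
Your proof is correct and takes essentially the same route as the paper: the corollary is stated there as an immediate consequence of Lemma \ref{gg5}, and your injection-plus-surplus argument is precisely that lemma's proof transplanted from the pair $(z,z^2)$ to $(z^i,z^{i+1})$ --- a prefix-extension injection, followed by two distinct words $\theta(z^i)c_1 \neq \theta(z^i)c_2$ sharing the length-$i|z|$ prefix $\theta(z^i)$, of which at most one can lie in the image. Your preliminary observation that blind substitution of $z^i$ into Lemma \ref{gg5} only yields $|C_{\theta}(z^i)| < |C_{\theta}(z^{2i})|$ is accurate and correctly identifies why the lemma's argument must be rerun for consecutive exponents rather than merely cited, a point the paper glosses over by calling the result immediate.
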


We recall the following from \cite{Kari2010}.
\begin{proposition}\label{gg1} If $uv = \theta(v)u$ and
 $\theta$ is an antimorphic involution, then $u = x(yx)^i$, $v = yx$ where $i \geq 0$ and $u, x, y$ are $\theta$-palindromes,
where $x\in \Sigma^*,\;y \in \Sigma^+$.
\end{proposition}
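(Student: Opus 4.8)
The plan is to recognise $uv=\theta(v)u$ as an ordinary conjugacy equation and feed it into Lemma~\ref{4}. Rewriting the hypothesis as $\theta(v)\,u=u\,v$ puts it in the shape $AB=BC$ with $A=\theta(v)$, $B=u$, $C=v$. I will assume $v\in\Sigma^+$ (if $v=\lambda$ the equation is vacuous and the asserted form, which requires $y\in\Sigma^+$, is inapplicable) and dispose of the case $u=\lambda$ at once, where the equation reads $v=\theta(v)$ and the claim holds with $x=\lambda$, $y=v$, $i=0$. With $u\in\Sigma^+$ as well, Lemma~\ref{4} applies and yields $p\in\Sigma^+$, $q\in\Sigma^*$ and $k\geq 0$ with $\theta(v)=pq$, $u=(pq)^k p$ and $v=qp$.

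The next step is to exploit the antimorphism to force the two pieces to be $\theta$-palindromes. Applying $\theta$ to $v=qp$ gives $\theta(v)=\theta(p)\theta(q)$, while Lemma~\ref{4} already gives $\theta(v)=pq$; comparing the two expressions for $\theta(v)$ on their length-$|p|$ prefixes (legitimate since $|p|=|\theta(p)|$) forces $p=\theta(p)$, and cancelling leaves $q=\theta(q)$. Hence both $p$ and $q$ are $\theta$-palindromes.

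It remains to reassemble these data into the stated normal form. Using the identity $(pq)^k p=p(qp)^k$, I can write $u=p(qp)^k$ and $v=qp$, which is exactly the shape $u=x(yx)^i$, $v=yx$ with $x=p$, $y=q$, $i=k$; moreover $u$ is then automatically a $\theta$-palindrome, since $\theta\bigl(p(qp)^k\bigr)=\bigl(\theta(qp)\bigr)^k\theta(p)=(pq)^k p=u$. The one place demanding care --- and the step I expect to be the main obstacle --- is matching the precise convention $y\in\Sigma^+$, $x\in\Sigma^*$ of the statement against the $p\in\Sigma^+$, $q\in\Sigma^*$ produced by Lemma~\ref{4}: when $q\neq\lambda$ the assignment $x=p$, $y=q$ already gives $y\in\Sigma^+$, whereas when $q=\lambda$ one has $v=p$ and $u=p^{k+1}$ and must reassign $x=\lambda$, $y=p$, $i=k+1$ to place the nonempty factor on $y$. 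Tracking this relabelling, together with the initial reduction to $u,v\in\Sigma^+$, is the only genuinely fiddly part; the palindrome deductions themselves are immediate prefix comparisons.
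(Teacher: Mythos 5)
Your proof is correct. Note that there is nothing in the paper to compare it against: Proposition \ref{gg1} is recalled verbatim from \cite{Kari2010} and the paper gives no proof of it, so your argument stands or falls on its own --- and it stands. The route you take (rewrite $uv=\theta(v)u$ as the conjugacy equation $\theta(v)u=uv$, solve it with Lemma \ref{4} to get $\theta(v)=pq$, $u=(pq)^kp$, $v=qp$, then apply $\theta$ to $v=qp$ and compare the equal-length prefixes of $pq=\theta(p)\theta(q)$ to force $p=\theta(p)$ and $q=\theta(q)$) is the standard Lyndon--Sch\"utzenberger argument, and the palindromicity of $u$ then follows exactly as you compute, since $\theta\bigl(p(qp)^k\bigr)=(pq)^kp=u$.

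You were also right to isolate the two genuine fine points, both of which you handle correctly. First, the statement is false as literally written when $v=\lambda$: the equation then holds for every $u$, while the conclusion forces $v=yx$ with $y\in\Sigma^+$ and $u$ a $\theta$-palindrome; so the implicit hypothesis $v\in\Sigma^+$ (present in the source, suppressed in this paper's restatement) is needed, and your separate disposal of $u=\lambda$ via $x=\lambda$, $y=v$, $i=0$ is fine. Second, Lemma \ref{4} as stated here produces $x\in\Sigma^+$, $y\in\Sigma^*$, the opposite convention from the proposition's $x\in\Sigma^*$, $y\in\Sigma^+$; your case split on $q=\lambda$, reassigning $x=\lambda$, $y=p$, $i=k+1$ (which indeed gives $u=p^{k+1}=x(yx)^i$ and $v=p=yx$ with $y=p=\theta(p)\in\Sigma^+$), is exactly the repair required. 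No gaps.
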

\begin{lemma}\label{gg4}
Let $z\in \Sigma^*$  then $|C_{\theta}(z)|=1$ iff $z=a^n$ such that $\theta(a)=a$ for  $a\in \Sigma$.
\end{lemma}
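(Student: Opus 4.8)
The plan is to prove the biconditional in Lemma \ref{gg4} by establishing both directions, with the forward direction being the substantive one. The statement claims that $|C_{\theta}(z)|=1$ exactly when $z$ is a power of a single letter fixed by $\theta$, i.e. $z=a^n$ with $\theta(a)=a$.

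\textbf{The easy direction.} First I would prove that if $z=a^n$ with $\theta(a)=a$, then $|C_{\theta}(z)|=1$. Since every factorization $z=uv$ has $u=a^j$ and $v=a^{n-j}$, and since $\theta$ fixes $a$ (hence fixes $a^k$ for all $k$), we compute $\theta(v)u=\theta(a^{n-j})a^j=a^{n-j}a^j=a^n=z$. Thus every $\theta$-conjugate equals $z$ itself, so $C_{\theta}(z)=\{z\}$ and $|C_{\theta}(z)|=1$. This is a routine verification.

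\textbf{The hard direction.} Now suppose $|C_{\theta}(z)|=1$. The key observation is that $\theta(v)u=z=uv$ must hold for every factorization $z=uv$, since all $\theta$-conjugates collapse to a single element (namely $z$, which is always in $C_{\theta}(z)$ via the trivial factorization). The strategy is to apply this collapse to well-chosen factorizations. Applying it to the factorization with $u=z$, $v=\lambda$ gives $z=z$ (no information), but applying it to any nontrivial split, in particular one extracting a single leading letter, should force strong structural constraints. Concretely, I would invoke Proposition \ref{gg1}: from $uv=\theta(v)u$ we get that $z=uv$ decomposes with $u,x,y$ all $\theta$-palindromes. The main work is to combine the constraints coming from \emph{all} factorizations simultaneously. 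I expect that writing $z=a_1a_2\cdots a_n$ and using the single-letter prefix/suffix factorizations will show that the first and last letters must be fixed points of $\theta$ and must coincide; iterating (or using primitivity of $\rho(z)$ together with Lemma \ref{4}) should force all letters equal to a common $\theta$-fixed letter $a$.

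\textbf{Main obstacle.} The delicate step will be converting ``$|C_{\theta}(z)|=1$'' into a usable algebraic identity and then squeezing out the single-letter conclusion. The cleanest route I would pursue is: take the factorization $z=uv$ with $|v|=1$, say $v=b$, so $u=a_1\cdots a_{n-1}$; the condition $\theta(b)u=ub$ forces, by comparing lengths and first letters, that $\theta(b)=a_1$ and that $b$ propagates through the word. Combining the $|v|=1$ and $|u|=1$ cases pins down $a_1=a_n$ and $\theta(a_1)=a_1$. The subtlety is ensuring these local conditions, gathered over the $n$ nontrivial factorizations, are jointly strong enough to conclude $z=a^n$ rather than merely constraining the endpoints; I anticipate needing an inductive argument on $|z|$, peeling off a fixed letter from each end and applying the same hypothesis to the shorter word, where the bookkeeping of which factorizations remain ``nontrivial'' after peeling is the part most prone to error.
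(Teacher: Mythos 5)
Your plan is correct and follows essentially the paper's route: the paper likewise reduces to a single-letter factorization (it peels the first letter $a$ off the $\theta$-palindrome $z$, derives $u_1\theta(a)=au_1$ from the collapse of two $\theta$-conjugates, and applies Proposition \ref{gg1}). The induction and ``combining all factorizations'' you anticipate as the main obstacle are unnecessary: applying Proposition \ref{gg1} to your equation $ub=\theta(b)u$ gives $u=x(yx)^i$, $b=yx$ with $x,y$ $\theta$-palindromes, and $|b|=1$ forces $x=\lambda$ and $y=b$, so $\theta(b)=b$ and $u=b^i$ is pinned down globally (the whole word, not merely the endpoints), yielding $z=b^{i+1}$ in one step.
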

\begin{proof}
Let $|C_{\theta}(z)|=1$, then $z$ is a $\theta$-palindrome. So, $z$ is of the form $uv\theta(u)$ where $v$ is a $\theta$-palindrome. Let  $u=au'$ where $a\in \Sigma$. As $|C_{\theta}(z)|=1$, we have, $z=au'v\theta(u')\theta(a)=aau'v\theta(u')$. Then, $u'v\theta(u')\theta(a)=au'v\theta(u')$. Take $u'v\theta(u')=u_1$, then $u_1\theta(a)=au_1$. By Proposition \ref{gg1}, we have $\theta(u_1)=x(yx)^i$ and $\theta(a)=yx$ where $x\in \Sigma^*,\;y \in \Sigma^+$. Then $x=\lambda$ and as $y$ is a $\theta$-palindrome, $u_1=a^i$ and $\theta(a)=a$. Hence, $z=a^{i+1}$. Converse is straightforward.
\end{proof}

Hence, we deduce the following by  Corollary \ref{gg3} and Lemma \ref{gg4}.
\begin{theorem}
Let $z \in \Sigma^*$ such that $z\neq a^n$ for $a\in \Sigma$ such that $\theta(a)=a$, then 
$|C_{\theta}(z^i)| < |C_{\theta}(z^{i+1})|$ for $i\geq 1 $.
\end{theorem}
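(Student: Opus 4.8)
The plan is to reduce the final theorem directly to the two results that immediately precede it, namely Corollary \ref{gg3} and Lemma \ref{gg4}. The statement asserts that if $z \neq a^n$ for any $a \in \Sigma$ with $\theta(a) = a$, then the sequence $|C_\theta(z^i)|$ is strictly increasing for $i \geq 1$. The crucial observation is that Corollary \ref{gg3} already gives exactly this strict monotonicity, but under the hypothesis $|C_\theta(z)| \neq 1$ rather than under the hypothesis on the form of $z$. So the entire content of the proof is to show that these two hypotheses coincide: that $z$ fails to be of the excluded form $a^n$ (with $\theta(a) = a$) if and only if $|C_\theta(z)| \neq 1$.

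First I would invoke Lemma \ref{gg4}, which states precisely that $|C_\theta(z)| = 1$ if and only if $z = a^n$ for some $a \in \Sigma$ with $\theta(a) = a$. Taking the contrapositive of the relevant direction, the assumption $z \neq a^n$ (for every such fixed letter $a$) is logically equivalent to $|C_\theta(z)| \neq 1$. This is the key step, and it is essentially a direct citation — the real work was already done inside Lemma \ref{gg4} via Proposition \ref{gg1}. Once this equivalence is in hand, the hypothesis of the theorem is converted into the hypothesis of Corollary \ref{gg3}.

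Then I would simply apply Corollary \ref{gg3} to conclude that $|C_\theta(z^i)| < |C_\theta(z^{i+1})|$ for all $i \geq 1$, which is exactly the desired conclusion. In full, the argument is: suppose $z \in \Sigma^*$ is not of the form $a^n$ with $\theta(a) = a$; by Lemma \ref{gg4} this forces $|C_\theta(z)| \neq 1$; and then Corollary \ref{gg3} yields the strict chain of inequalities. No further computation is needed, since the inductive strengthening from the single-step Lemma \ref{gg5} to the general exponent was already packaged into Corollary \ref{gg3}.

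I expect essentially no obstacle here, since the theorem is a clean corollary assembled from the two named prior results; the only point requiring mild care is making explicit that the phrase ``$z \neq a^n$ for $a \in \Sigma$ such that $\theta(a) = a$'' is the exact negation of the characterization in Lemma \ref{gg4}, so that the hypotheses match without a gap. In particular I would make sure the quantification is read correctly — the excluded words are those that are powers of a single $\theta$-fixed letter — so that any word using at least two distinct letters, or any word whose sole letter $a$ satisfies $\theta(a) \neq a$, automatically satisfies $|C_\theta(z)| \neq 1$ and hence falls under Corollary \ref{gg3}.
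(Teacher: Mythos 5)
Your proposal is correct and matches the paper exactly: the paper states the theorem with the phrase ``Hence, we deduce the following by Corollary \ref{gg3} and Lemma \ref{gg4},'' which is precisely your argument --- use Lemma \ref{gg4} to translate the hypothesis $z \neq a^n$ (with $\theta(a)=a$) into $|C_{\theta}(z)| \neq 1$, then apply Corollary \ref{gg3}. Your added care about reading the quantification correctly is a reasonable explicitness beyond the paper's one-line deduction, but the route is identical.
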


 \section{Theta palindromes in the conjugacy class of a word}\label{sec-4}

%{\bf write some intro with proper reference and all for this section..}

The distribution of palindromes in the conjugacy class of a word was studied in \cite{2015arXiv150309112G}. The authors proved that there are at most two distinct palindromes in the conjugacy class of a given word. In this section, we show an analogous result pertaining to the distribution of $\theta$-palindromes and prove that the conjugacy class of any given word also contains at most two $\theta$-palindromes. We also provide the structure of such words.
We begin by recalling the following from \cite{2015arXiv150309112G}.
\begin{lemma}
If the conjugacy class of a word contains two distinct
palindromes, say $uv$ and $vu$, then there exists a word $x$ and a number $i$ such that $xx^R$
is primitive, $uv = (xx^R)^{i}$, and $vu = (x^R x)^{i}$.
\end{lemma}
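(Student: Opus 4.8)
The plan is to pass to primitive roots and reduce everything to a statement about a single primitive palindrome and one of its cyclic shifts. First I would record the two palindrome conditions coming from the hypotheses, namely $uv=(uv)^R=v^Ru^R$ and $vu=(vu)^R=u^Rv^R$, together with the elementary fact that the primitive root of a palindrome is again a palindrome: if $w=w^R$ and $w=\rho(w)^i$, then $w^R=(\rho(w)^R)^i$, so comparing the first $|\rho(w)|$ letters forces $\rho(w)=\rho(w)^R$. Writing $\rho:=\rho(uv)$, this gives $uv=\rho^i$ with $\rho$ a primitive palindrome.

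Next I would exploit that $vu$ is a conjugate of $uv=\rho^i$. Left-shifting $\rho^i$ by $|u|$ positions is the same as shifting by $r:=|u| \bmod |\rho|$, so $vu=(\rho_2\rho_1)^i$ where $\rho=\rho_1\rho_2$ with $|\rho_1|=r$. Since $uv\neq vu$ we must have $0<r<|\rho|$, hence $\rho_1,\rho_2\in\Sigma^+$. As $\rho_2\rho_1$ is a conjugate of the primitive word $\rho$ it is primitive, and since $(\rho_2\rho_1)^i=vu$ is a palindrome, the same primitive-root argument shows $\rho':=\rho_2\rho_1$ is a palindrome.

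The heart of the argument is then to show that $\rho$ has even length and equals $\rho_1\rho_1^R$. Starting from the conjugacy identity $\rho\rho_1=\rho_1\rho'$ and reversing it, the palindromicity of $\rho$ and $\rho'$ yields $\rho_1^R\rho=\rho'\rho_1^R$. Substituting this into $\rho(\rho_1\rho_1^R)=(\rho\rho_1)\rho_1^R=\rho_1\rho'\rho_1^R=\rho_1(\rho_1^R\rho)=(\rho_1\rho_1^R)\rho$ shows that $\rho$ commutes with $\rho_1\rho_1^R$. By the first part of Lemma~\ref{4}, both are powers of a common primitive word, and since $\rho$ is primitive this word is $\rho$ itself; hence $\rho_1\rho_1^R=\rho^m$. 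Comparing lengths, $2|\rho_1|=m|\rho|$ with $0<2|\rho_1|<2|\rho|$ forces $m=1$, so $|\rho|$ is even, $|\rho_1|=|\rho|/2$, and $\rho=\rho_1\rho_1^R$.

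Finally, setting $x:=\rho_1$ gives $\rho=xx^R$ primitive and $\rho_2=x^R$, so $\rho'=\rho_2\rho_1=x^Rx$, whence $uv=(xx^R)^i$ and $vu=(x^Rx)^i$, as claimed. The step I expect to be the main obstacle is the commutation: one must combine the conjugacy relation with reversal in exactly the right way to produce the genuine word identity $\rho(\rho_1\rho_1^R)=(\rho_1\rho_1^R)\rho$ without appealing to cancellation in a free group, after which Lemma~\ref{4} and the length count finish things quickly. A secondary point to watch is that the \emph{distinctness} of the two palindromes is precisely what forces the shift to be proper (so that $\rho_1,\rho_2$ are nonempty), which is what allows the length count to pin down $m=1$.
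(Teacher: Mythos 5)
Your proof is correct, but it takes a genuinely different route from the paper's. The paper recalls this lemma from Guo--Shallit--Shur \cite{2015arXiv150309112G} without proof, and the proof it does give of the $\theta$-analogue (Lemma \ref{6}) is an induction on $|uv|$: there, $v$ is recognized as a prefix and suffix of $\theta(u)$, decomposed as $v=(st)^is$ via Lemma \ref{4}, an inspection of central factors shows $st$ and $ts$ are again ($\theta$-)palindromes, and the induction hypothesis applied to the shorter distinct pair $st$, $ts$ is combined with a case analysis on where $s$ sits inside $(x\theta(x))^j$. Your argument replaces all of this with a direct, induction-free computation: pass to the primitive root $\rho$ of $uv$ (a palindrome, as you verify by uniqueness of primitive roots), realize $vu$ as the cyclic shift $(\rho_2\rho_1)^i$ with $0<|\rho_1|<|\rho|$ forced precisely by the distinctness hypothesis, and derive the commutation $\rho(\rho_1\rho_1^R)=(\rho_1\rho_1^R)\rho$ from the word identity $\rho\rho_1=\rho_1\rho'$ and its reversal --- no cancellation beyond honest concatenation is used; Lemma \ref{4} together with the length bound $2|\rho_1|<2|\rho|$ then pins down $\rho=\rho_1\rho_1^R$ in one stroke. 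All steps check: conjugates of primitive words are primitive, the primitive root of a palindrome is a palindrome, and $m=1$ follows correctly from the length count. What your approach buys is substantial: it avoids both the induction and the positional case analysis, it identifies the exponent $i$ directly as the exponent of the primitive root (the paper's recursion instead produces it in the form $2j(i+1)$), and --- worth noting --- it transfers verbatim to an antimorphic involution $\theta$ (replace $(\cdot)^R$ by $\theta$ throughout; applying $\theta$ to $\rho\rho_1=\rho_1\rho'$ gives $\theta(\rho_1)\rho=\rho'\theta(\rho_1)$, whence $\rho=\rho_1\theta(\rho_1)$), so it also yields a shorter alternative proof of the paper's Lemma \ref{6}.
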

In this section, we study the distribution of $\theta$-palindromes in the conjugacy class of a word.

We first observe the following.
\begin{proposition}\label{3}
If $w^{ n_1}$ is a $\theta$-palindrome, then $w^{n_2}$ is a $\theta$-palindrome for $n_1,n_2\geq 1$.
\end{proposition}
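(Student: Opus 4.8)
The plan is to reduce the statement about arbitrary powers to the single claim that $w$ itself is a $\theta$-palindrome, after which the conclusion for every power becomes immediate. The whole argument rests on two ingredients: the antimorphism property of $\theta$ and the elementary fact that two words sharing a common power must be equal.

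First I would exploit that $\theta$ is an antimorphism to rewrite $\theta(w^{n_1})$. Iterating the identity $\theta(uv)=\theta(v)\theta(u)$ across the $n_1$ identical factors of $w^{n_1}$ gives $\theta(w^{n_1})=\theta(w)^{n_1}$: the antimorphism reverses the order of the factors, but since all $n_1$ factors equal the single word $\theta(w)$, this reversal has no effect. Combining this with the hypothesis $w^{n_1}=\theta(w^{n_1})$ yields the equation $w^{n_1}=\theta(w)^{n_1}$ between two $n_1$-th powers of words.

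Next I would invoke the cancellation principle that equal powers force equal bases: if $x^{n}=y^{n}$ with $n\geq 1$, then equating lengths gives $|x|=|y|$ (as $n\geq 1$), and then equating the length-$|x|$ prefixes of the two sides gives $x=y$. Applying this with $x=w$, $y=\theta(w)$, and $n=n_1$ produces $w=\theta(w)$, i.e.\ $w$ is itself a $\theta$-palindrome. Finally, with $w=\theta(w)$ in hand, for any $n_2\geq 1$ the same antimorphism computation gives $\theta(w^{n_2})=\theta(w)^{n_2}=w^{n_2}$, so $w^{n_2}$ is a $\theta$-palindrome, which is exactly the assertion.

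I do not anticipate a genuine obstacle here; the argument is short and self-contained. The one point deserving a word of care is the first step, namely verifying that the order-reversal intrinsic to the antimorphism is harmless precisely because every factor of $w^{n_1}$ is the same word $w$. Once that observation is recorded, the power-cancellation step and the concluding computation are entirely routine.
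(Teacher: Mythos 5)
Your proof is correct, and since the paper states this proposition as a bare observation with no proof supplied, your argument fills in exactly the routine verification the authors evidently had in mind: $\theta(w^{n_1})=\theta(w)^{n_1}$ because the order-reversal of the antimorphism acts trivially on identical factors, then $w^{n_1}=\theta(w)^{n_1}$ forces $w=\theta(w)$ by comparing lengths and then length-$|w|$ prefixes, after which $\theta(w^{n_2})=\theta(w)^{n_2}=w^{n_2}$ is immediate. Your elementary prefix-comparison for power cancellation is a nice touch, as it avoids any appeal to primitive roots or Lyndon--Sch\"utzenberger machinery.
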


We recall the following from \cite{2015arXiv150309112G}.
\begin{lemma}
If $u\neq u^R$ and $uu^R = z^{i}$ for a primitive word $z$ then, $i$ is odd and $z$ = $xx^R$ for some $x$.
\end{lemma}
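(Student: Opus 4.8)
The plan is to exploit the palindromicity of the left-hand side together with the uniqueness of primitive roots. First I would record that $uu^R$ is itself a palindrome, since $(uu^R)^R = (u^R)^R u^R = uu^R$. Applying reversal to the hypothesis $uu^R = z^i$ and using $(z^i)^R = (z^R)^i$ then yields $z^i = (z^R)^i$.

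Next I would argue that $z = z^R$. Reversal preserves primitivity: if $z^R$ were a proper power $w^k$ with $k \ge 2$, then $z = (z^R)^R = (w^R)^k$ would contradict the primitivity of $z$. Hence $z^R$ is primitive, and since $z^i = (z^R)^i$ is a power of a primitive word, the uniqueness of the primitive root forces $z = z^R$. Thus $z$ is a palindrome.

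The parity of $i$ then comes from the hypothesis $u \ne u^R$. If $i$ were even, say $i = 2j$, then from $2|u| = |uu^R| = i|z| = 2j|z|$ we would get $|u| = j|z|$, so $u$ is the length-$j|z|$ prefix of $z^{2j}$, namely $u = z^j$. But then $u^R = (z^j)^R = (z^R)^j = z^j = u$, contradicting $u \ne u^R$. Therefore $i$ is odd.

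Finally, since $i$ is odd and $2|u| = i|z|$, the length $|z|$ must be even; write $|z| = 2m$ and $z = z'z''$ with $|z'| = |z''| = m$. The palindrome identity $z = z^R = (z'')^R(z')^R$ forces $z'' = (z')^R$, so $z = z'(z')^R$, i.e. $z = xx^R$ with $x = z'$. The one point demanding care is the uniqueness-of-roots step that collapses $z$ and $z^R$ into a single palindrome; everything after that is a direct length-and-prefix computation, so I expect no serious obstacle.
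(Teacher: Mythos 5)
Your proof is correct, but it diverges from the paper's argument on the structural half of the statement. (Strictly speaking the paper states this lemma without proof, citing Guo, Shallit and Shur; but it proves the exact $\theta$-analogue, Lemma for $u \neq \theta(u)$ and $u\theta(u) = z^i$, immediately afterwards, and that proof is the natural benchmark.) Your parity step is essentially the paper's: an even exponent $i = 2j$ writes $uu^R$ as the square $(z^j)^2$, and comparing the two halves of equal length forces $u = u^R$, a contradiction — though note you did not even need $z = z^R$ there, since the suffix of length $j|z|$ of $z^{2j}$ is already $z^j$, giving $u^R = z^j = u$ directly. Where you genuinely differ is the conclusion $z = xx^R$: the paper argues positionally, with no machinery at all — $i$ odd forces $|z|$ even; writing $z = xx'$ with $|x| = |x'|$, the half $x$ is a prefix of $u$ and $x'$ is a suffix of $u^R$ (of $\theta(u)$ in the paper's version), and since the length-$|x|$ suffix of $u^R$ is the reverse of the length-$|x|$ prefix of $u$, one reads off $x' = x^R$ immediately. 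You instead first upgrade the palindromicity of $uu^R$ to palindromicity of its primitive root, via $(z^i)^R = (z^R)^i$, closure of primitivity under reversal, and uniqueness of primitive roots, and only then split the palindrome $z$ in half. This is sound (and the uniqueness-of-roots appeal could be made even cheaper by comparing the length-$|z|$ prefixes of the two sides of $z^i = (z^R)^i$), and it buys something the paper's proof does not make explicit: the primitive root $z$ is itself a palindrome, a fact your route isolates cleanly and which transfers verbatim to the $\theta$-setting to give $z = \theta(z)$. The trade-off is that the paper's two-line letter-position comparison is more elementary, invoking nothing beyond the hypothesis of primitivity. Both arguments are complete.
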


We deduce the following.
\begin{lemma}\label{1}
Suppose $u\neq \theta(u)$ and $u\theta(u) = z^{i}$ for a primitive word $z$. Then, $i$ is odd and $z$ = $x\theta(x)$ for some $x$.
\end{lemma}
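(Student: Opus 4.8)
The statement to prove is:

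\begin{lemma}
Suppose $u\neq \theta(u)$ and $u\theta(u) = z^{i}$ for a primitive word $z$. Then, $i$ is odd and $z = x\theta(x)$ for some $x$.
\end{lemma}

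The plan is to mimic the proof of the purely palindromic analogue (the lemma just recalled, with $R$ replaced by $\theta$), since $\theta$ is an antimorphic involution and therefore behaves formally like the reversal map $R$: it satisfies $\theta^2=\mathrm{id}$ and $\theta(ab)=\theta(b)\theta(a)$. The first observation is that $u\theta(u)$ is itself a $\theta$-palindrome, since $\theta(u\theta(u))=\theta\theta(u)\theta(u)=u\theta(u)$. Hence $z^i$ is a $\theta$-palindrome, and applying $\theta$ gives $\theta(z)^i = \theta(z^i)=z^i$. Because $z$ is primitive and a word has a unique primitive root, $\theta(z)$ is also primitive (as $\theta$ is a length-preserving bijection), and from $\theta(z)^i=z^i$ together with uniqueness of primitive roots I can conclude $\theta(z)=z$ \emph{only if that were forced} — but here I must be careful, because $z$ need not be a $\theta$-palindrome. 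Instead the right deduction is that $z^i = \theta(z)^i$, so $z$ and $\theta(z)$ are two primitive words with equal $i$-th powers, forcing $\theta(z)=z$; yet that would make $z$ a $\theta$-palindrome. I would resolve this tension exactly as the classical proof does: the equality $z^i=\theta(z)^i$ does force $z=\theta(z)$ only when the powers are \emph{aligned}, whereas here they are rotated relative to one another, which is precisely what pins down the parity of $i$.

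Concretely, I would write $z^i = u\theta(u)$ and compare this factorization with the one obtained by applying $\theta$. Since $|u|=|\theta(u)| = i|z|/2$, the word $z$ has even total power-length split at the midpoint $u \mid \theta(u)$. I would locate where the midpoint falls inside the block structure of $z^i$. Writing $i|z| = 2|u|$, the midpoint sits after $|u|$ letters; expressing $|u| = q|z| + r$ with $0\le r < |z|$, the prefix $u$ consists of $q$ full copies of $z$ followed by a length-$r$ prefix $p$ of $z$, so $z = ps$ for the corresponding suffix $s$. Then $\theta(u) = \theta(s)\theta(z)^{\,i-q-1}\theta(p)$ expanded via the antimorphism law, and comparing $u\theta(u)=z^i$ letter-block by letter-block yields relations among $p,s$ and their $\theta$-images. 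The key relation that emerges is that conjugates $ps$ and $\theta(s)\theta(p)$ of $z$ coincide, i.e. $z$ equals one of its own $\theta$-images up to rotation; combined with primitivity of $z$ and Lemma \ref{4} (the $uv=vu$ and $uv=vw$ dichotomy), this forces $r$ to be a specific value and hence constrains $i \bmod 2$.

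The cleanest route, which I would actually follow, is: set $z = x\theta(x)$ as the desired target and argue by contradiction that $i$ cannot be even. If $i=2m$ were even, then $u\theta(u) = z^{2m}=(z^m)(z^m)$ would give $u = z^m$ (matching the first half, since $|u| = m|z|$), whence $\theta(u) = \theta(z^m)=\theta(z)^m$; but also $\theta(u)=z^m$, so $\theta(z)^m = z^m$, and by uniqueness of primitive roots $\theta(z)=z$, making $z$ a $\theta$-palindrome. Feeding this back, $u=z^m$ with $z=\theta(z)$ gives $\theta(u)=\theta(z)^m=z^m=u$, contradicting $u\neq\theta(u)$. Therefore $i$ is odd. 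Writing $i=2m+1$, the midpoint of $z^{2m+1}$ falls in the middle of the central copy of $z$; splitting that central $z$ as $z=xy$ at the midpoint gives $u = z^m x$ and $\theta(u)=y z^m$, and the identity $u\theta(u)=z^i$ together with $\theta(u\theta(u))=u\theta(u)$ forces $y=\theta(x)$, so $z = x\theta(x)$ as claimed.

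The step I expect to be the main obstacle is the even case, specifically justifying rigorously that $\theta(z)^m = z^m$ with both $z,\theta(z)$ primitive forces $\theta(z)=z$: this uses that a nonempty word has a \emph{unique} primitive root, so two primitive words sharing a common power must be equal. The only subtlety is confirming that $\theta(z)$ is primitive, which holds because $\theta$ is a length-preserving involutive bijection and therefore carries primitive words to primitive words (a nontrivial power $\theta(z)=w^k$ would pull back under $\theta$ to $z=\theta(w)^k$, contradicting primitivity of $z$). Once that is in place, the parity argument and the extraction of $x$ with $z=x\theta(x)$ are routine block comparisons analogous to the classical palindrome lemma.
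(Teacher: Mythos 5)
Your decisive argument (the ``cleanest route'' paragraph) is correct and essentially identical to the paper's proof: in the even case $u = z^{i/2} = \theta(u)$ gives the contradiction, and in the odd case $|z|$ is even, and splitting the central copy of $z$ at the midpoint and comparing the prefix of $u$ against the $\theta$-image of the suffix of $\theta(u)$ yields $z = x\theta(x)$. The worry in your second paragraph about a ``tension'' in concluding $\theta(z)=z$ is a red herring --- $x\theta(x)$ is itself a $\theta$-palindrome, so $\theta(z)=z$ is both true and harmless --- but your final argument rightly does not depend on that detour.
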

\begin{proof}
If $i$ is even, then $u\theta(u) = (z^{\frac{i}{2}})^2$. Hence, $u = \theta(u)$, contradicting the conditions of the lemma. So $i$ is odd and then $|z|$ is even. Let $z = xx'$, where $|x| = |x'|$.
We see that $x$ is a prefix of $u$ and $x'$
is a suffix of $\theta(u)$. Hence, $x' = \theta(x)$, as required.
\end{proof}
\begin{lemma}\label{6}
If the conjugacy class of a word contains two distinct
$\theta$-palindromes, say $uv$ and $vu$, then there exists a word $x$ and a number $i$ such that $x\theta(x)$
is primitive, $uv = (x\theta(x))^{i}$, and $vu = (\theta(x) x)^{i}$.
\end{lemma}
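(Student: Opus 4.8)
The plan is to follow the same structure as the classical palindrome result (the analogous Lemma restated just before this one), replacing reversal $R$ by the antimorphic involution $\theta$ throughout, and to use the $\theta$-version of the primitive-root decomposition provided by Lemma \ref{1}. Since $uv$ and $vu$ are conjugate, they share a common primitive root; that is, there is a primitive word $z$ and an integer $i$ with $uv = z^i$ and $vu$ a cyclic rotation of $z^i$. The goal is to show $z$ itself factors as $x\theta(x)$ and that $vu = (\theta(x)x)^i$.

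First I would record that both $uv$ and $vu$ are $\theta$-palindromes, so $uv = \theta(uv) = \theta(v)\theta(u)$ and $vu = \theta(vu) = \theta(u)\theta(v)$, where the last equalities use that $\theta$ is an antimorphism. Next I would want to invoke Lemma \ref{1}, which requires writing $uv$ in the form $t\theta(t)$ with $t \neq \theta(t)$. The natural candidate is to set $t = uv$ itself only if it is \emph{not} a $\theta$-palindrome, which it is; so instead I would exploit the $\theta$-palindrome condition directly: from $uv = \theta(v)\theta(u)$ one reads off that $u = \theta(v_1)$ on a suitable prefix/suffix split, making $uv = u\,\theta(u')$-type product. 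Concretely, the cleanest route is to observe that since $uv$ is a $\theta$-palindrome and $uv, vu$ are distinct conjugates, the word $u$ cannot equal $\theta(v)$ (otherwise $uv = \theta(v)v$ would force, via the common-root argument, $uv = vu$). Then applying Lemma \ref{1} to the product $uv$ (viewed with the splitting into $u$ and $v$, so $\theta(v)\theta(u) = uv$ gives a word of the form $u\theta(u)$ after aligning the $\theta$-palindrome halves) yields that $i$ is odd and the primitive root $z = x\theta(x)$ for some $x$, which is exactly the primitivity and factorization claim.

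The remaining step is to pin down $vu = (\theta(x)x)^i$. Here I would use that $vu$ is the conjugate of $uv = (x\theta(x))^i$ obtained by moving the prefix $u$ to the end, together with the fact that $vu$ is \emph{also} a $\theta$-palindrome. By Lemma \ref{2} the conjugacy class of $(x\theta(x))^i$ has exactly $|x\theta(x)| = 2|x|$ elements, and among the cyclic rotations of $x\theta(x)$ the only one (besides $x\theta(x)$ itself) that is a $\theta$-palindrome is $\theta(x)x$; raising to the $i$-th power, $(\theta(x)x)^i$ is the unique $\theta$-palindromic conjugate distinct from $uv$, forcing $vu = (\theta(x)x)^i$.

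The main obstacle I anticipate is the alignment step in the second paragraph: making rigorous the passage from the two symmetric identities $uv = \theta(v)\theta(u)$ and $vu = \theta(u)\theta(v)$ to an application of Lemma \ref{1} in the form $u\theta(u) = z^i$. One must verify that the prefix $x$ of $uv$ and the corresponding suffix pair up as $\theta(x)$ under the palindrome symmetry, and carefully exclude the degenerate case $u = \theta(v)$ (which collapses the two conjugates into one), since Lemma \ref{1} has the standing hypothesis $u \neq \theta(u)$. Getting the length bookkeeping exactly right — that $|z|$ is even and $z$ splits into halves of equal length swapped by $\theta$ — is where the care is needed; the rest is a transcription of the classical argument with $\theta$ in place of $R$.
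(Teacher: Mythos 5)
There is a genuine gap, and it sits exactly where you flag ``the main obstacle'': the unequal-alignment case is never argued, and it is the entire content of the paper's proof. Your reduction does work when $|u|=|v|$: $\theta$-palindromicity of $uv$ then forces $v=\theta(u)$, and Lemma \ref{1} applies since $u\neq\theta(u)$ (otherwise $uv=vu$). More generally, any \emph{even-length} $\theta$-palindrome automatically equals $t\,\theta(t)$ for its first half $t$, so Lemma \ref{1} settles things whenever $t\neq\theta(t)$. But when $t=\theta(t)$, and when $|uv|$ is odd, no such factorization is available ($\theta$-palindromes such as $acb$, with $\theta(a)=b$, $\theta(c)=c$, are never of the form $t\theta(t)$), and your proposal offers nothing in its place. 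The paper closes this by induction on $|uv|$: for $|u|>|v|$, the two palindromic conditions make $v$ both a prefix and a suffix of $\theta(u)$, so $\theta(u)=vw_1=w_2v$; Lemma \ref{4} gives $v=(st)^{i}s$ and $\theta(u)=(st)^{i+1}s$; the central factors of $uv$ and $vu$ show that $st$ and $ts$ are shorter $\theta$-palindromes, distinct because $t=\lambda$ or $st=ts$ would each force $uv=vu$; the induction hypothesis applied to $st,ts$ gives $st=(x\theta(x))^{j}$; and a case analysis on where $s$ can sit inside $(x\theta(x))^{j}$, using primitivity via Lemma \ref{2}, forces $s=(x\theta(x))^{k}x$, whence $uv=(x\theta(x))^{2j(i+1)}$ and $vu=(\theta(x)x)^{2j(i+1)}$ simultaneously. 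None of this is a ``transcription detail''. Incidentally, your side-claim that $u=\theta(v)$ must be excluded because it would force $uv=vu$ is false: with $\theta(a)=b$, $u=bb$, $v=aa$, one has $u=\theta(v)$ yet $uv=bbaa$ and $vu=aabb$ are distinct $\theta$-palindromes (indeed $uv=x\theta(x)$ with $x=bb$); $u=\theta(v)$ is precisely the case where Lemma \ref{1} applies, not the one to avoid.

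Your final step is also circular. You derive $vu=(\theta(x)x)^{i}$ from the assertion that $\theta(x)x$ is the unique $\theta$-palindromic rotation of $x\theta(x)$ other than itself. Lemma \ref{2} only counts the rotations of a primitive word; it says nothing about which of them are $\theta$-palindromes. The uniqueness you invoke is exactly what the theorem following Lemma \ref{6} deduces \emph{from} Lemma \ref{6}, and proving it independently for a primitive word $x\theta(x)$ is essentially the $i=1$ instance of the lemma you are trying to prove. In the paper no such input is needed: the expression for $vu$ falls out of the same terminal computation that identifies $uv$. (A minor slip besides: conjugate words have conjugate primitive roots, not a common one; your immediate rephrasing --- that $vu$ is a cyclic rotation of $z^{i}$ --- is what you actually use and is fine.)
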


\begin{proof}
We use induction on $n=|uv|$.
For $n=2$, if there are two distinct  $\theta$-palindromes, then they must be of the form $x\theta(x)$ and $\theta(x)x$ where $x\in \Sigma$ and $x\neq \theta(x)$, i.e.,  $x\theta(x)$ is primitive.
For the inductive step, assume $|u|\geq |v|$ without loss of generality.
If $|u|=|v|$, then $v = \theta(u)$. By Lemma \ref{1}, we get $uv = (x\theta(x))^{i},\; vu = (\theta(x)x)^{i}$ for a  primitive
word $x\theta(x)$ and $i \geq 1$.\\
Now let $|u|>|v|$. Then $v$ is a prefix and suffix of $\theta(u)$.
$\theta(u)=vw_1=w_2v$ for $w_1,w_2\in \Sigma^*$. This implies by Lemma \ref{4}, we obtain $v = (st)^{i}s$, and hence, $\theta(u) = (st)^{(i+1)}s$,
 for $s\neq \lambda$, and $i \geq 0$. 
Looking at the central factor of the  palindromes
$$uv = (\theta(s)\theta(t))^{(i+1)}\theta(s)(st)^{i}s$$ 
$$vu= (st)^{i} s(\theta(s)\theta(t))^{(i+1)}\theta(s),$$
we see that $st$ and $ts$ are also $\theta$-palindromes. If $t = \lambda$, then $s$ is a $\theta$-palindrome,
implying $uv = vu$, which is a contradiction as $uv$ and $vu$ are distinct. 
If $st = ts$, then by Lemma \ref{4}, both $s$ and $t$ are powers of some primitive word $z$ and  by Proposition \ref{3}, $s, t$, and $z$ are $\theta$-palindromes, which
again implie $uv = vu$ which is a contradiction. Thus, $st\neq ts$ with $|st| < n$ and by inductive
hypothesis we get $st = (x\theta(x))^{j}$, $t s = (\theta(x)x)^{j}$
for some
primitive word $x\theta(x)$. Then, we have 
\begin{equation}\label{e1}
v = (x\theta(x))^{ji} s = s(\theta(x)x)^{ji}
\end{equation}

%Since by Lemma \ref{2}, all conjugates of the word $x\theta(x)$  are distinct words, $x$ and $\theta(x)$ occur in the same positions in both representations in Eq \ref{e1}.\\

Note that, if $s = (x\theta(x))^{k}x_1$ where $x=x_1x_2$, then Eq \ref{e1} becomes
$$(x_1x_2\theta(x_1x_2))^{j(i+1)}(x\theta(x))^{k}x_1  = (x\theta(x))^{k}x_1(\theta(x_1x_2)x_1x_2)^{j(i+1)}$$
Then by the comparing suffix,  $x_2\theta(x)x_1=\theta(x)x_1x_2$ which is a contradiction, since all conjugates of $x\theta(x)$ are distinct by Lemma \ref{2}. The argument for the case when $s = (x\theta(x))^{k}x\theta(x_2)$ is similar. If $s = x(\theta(x)^k$, then Eq \ref{e1} becomes
$$ (x\theta(x))^{ji} (x\theta(x))^k = (x\theta(x))^k(\theta(x)x)^{ji}$$
and by comparing the suffix,  we get $x\theta(x) = \theta(x)x$ which is a contradiction as $x\theta(x)$ is primitive.
Thus, 
$s = (x\theta( x))^{k} x$ for some $k, 0 \leq k < j$. Then we can easily compute $t, \theta(s)$, and $\theta(t)$ by using $s$ (we know the value of $st$) to
get $uv = (x\theta(x))^{2j(i+1)}, \; vu = (\theta(x)x)^{2j(i+1)}$. Hence, the proof.
\end{proof}
We give examples of  words that have zero, one and two $\theta$-palindromes each in their conjugacy class.
\begin{example}\label{ex1}
Let $\Sigma=\{a,b,c\}$, and  $\theta$ such that $\theta(a)=b,\;\theta(b)=a$ and $\theta(c)=c$. Then
\begin{enumerate}
    \item The word $aaa$ has zero $\theta$-palindromes in its  conjugacy class.
    \item  The word $cabab$ has exactly one $\theta$-palindrome $abcab$ in its  conjugacy class.
    \item  The word $abab$ has two exactly two $\theta$-palindromes $abab$ and  $baba$ in its  conjugacy class. \end{enumerate}

\end{example}

We know from Example \ref{ex1} that there exists words that contain exactly, zero, one or two $\theta$-palindromes in their conjugacy class. However, in the following result, we show that a conjugacy class of any word contains at most two distinct $\theta$-palindromes and we also find the structure of words that contain exactly two $\theta$-palindromes in their conjugacy class.

\begin{theorem}
The conjugacy class of a word  contains at most two $\theta$-palindromes. It has exactly two $\theta$-palindromes iff it contains a  word of the form $(\alpha  \theta(\alpha))^{l}$, where $\alpha \theta(\alpha)$ is  primitive  and $l\geq 1$. Note that such a word  has even length.
\end{theorem}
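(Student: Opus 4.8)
The plan is to reduce the entire statement to one counting problem: how many conjugates of a primitive word of the form $z=\alpha\theta(\alpha)$ are themselves $\theta$-palindromes. I would treat the ``only if'' direction and the ``at most two'' bound together. Suppose $C(w)$ contains two distinct $\theta$-palindromes; being conjugate, they can be written as $uv$ and $vu$, so Lemma~\ref{6} applies and gives $uv=(x\theta(x))^{i}$, $vu=(\theta(x)x)^{i}$ with $x\theta(x)$ primitive and $i\ge 1$. In particular $C(w)$ contains the word $(x\theta(x))^{i}$, which is exactly a word of the required form $(\alpha\theta(\alpha))^{l}$ (take $\alpha=x$, $l=i$); this already settles the ``only if'' implication. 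Moreover the conjugacy class is now the conjugacy class of $z^{i}$ for the primitive word $z=x\theta(x)$, so by Lemma~\ref{2} it has exactly $|z|=2|x|$ elements, all of them cyclic rotations of $z^{i}$.

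The heart of the argument, which also yields the ``if'' direction, is to count the $\theta$-palindromes among these rotations. Writing a rotation of $z^{i}$ as $w^{i}$ where $w$ is the corresponding rotation of $z$, Proposition~\ref{3} shows $w^{i}$ is a $\theta$-palindrome if and only if $w$ is, so it suffices to count the rotations of $z=\alpha\theta(\alpha)$ that are $\theta$-palindromes. Set $m=|\alpha|$ and index $z$ by $0,\dots,2m-1$. Since $z$ is a $\theta$-palindrome, $z[2m-1-j]=\theta(z[j])$ for all $j$. For the rotation $w$ starting at position $k$, the condition $w[2m-1-t]=\theta(w[t])$ translates, after substituting this relation for $z$, into $z[j+2k]=z[j]$ for all $j$ (indices mod $2m$); that is, $z$ is invariant under the cyclic shift by $2k$. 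Because $z$ is primitive of length $2m$, all its rotations are distinct (Lemma~\ref{2} with exponent $1$), forcing $2m\mid 2k$, i.e. $k\equiv 0 \pmod m$. Hence the only $\theta$-palindromic rotations are $k=0$, giving $\alpha\theta(\alpha)$, and $k=m$, giving $\theta(\alpha)\alpha$, both of which are directly checked to be $\theta$-palindromes. So the class of $z^{i}$ has exactly the two $\theta$-palindromes $(\alpha\theta(\alpha))^{i}$ and $(\theta(\alpha)\alpha)^{i}$.

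These two facts close everything. Any conjugacy class containing at least two $\theta$-palindromes is, by the first step, the class of some $z^{i}$ with $z=x\theta(x)$ primitive, and the counting step shows it has exactly two; therefore no class has more than two. For the equivalence, ``only if'' came out of the first step, and ``if'' is immediate from the counting step. Finally, $(\alpha\theta(\alpha))^{l}$ has length $2l|\alpha|$, which is even, noting $\alpha\ne\lambda$ since $\alpha\theta(\alpha)$ is primitive. I expect the main obstacle to be the rotation computation in the second step: correctly tracking the index arithmetic so that the $\theta$-palindrome condition on a rotation collapses to shift-invariance of $z$, after which primitivity kills every shift except $0$ and $m$. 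Everything else is a routine application of Lemmas~\ref{6}, \ref{2} and Proposition~\ref{3}.
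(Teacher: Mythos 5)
Your proof is correct, and its first half is exactly the paper's: both of you invoke Lemma \ref{6} on two distinct $\theta$-palindromes $uv$, $vu$ to get $uv=(x\theta(x))^{i}$ with $x\theta(x)$ primitive, which settles the ``only if'' direction. The second half, however, takes a genuinely different route. The paper reduces WLOG to primitive $w=x\theta(x)$ and, for a putative further $\theta$-palindrome $u_2u_1$ with $u_1u_2=x\theta(x)$ and $|u_1|\neq|u_2|$, re-runs the inductive argument inside the proof of Lemma \ref{6} to force $u_1u_2=(y\theta(y))^{2k}$, an even proper power, contradicting primitivity; the equal-length case then yields only $\theta(x)x$. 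You instead count directly: every conjugate of $z^{i}$ (for $z=\alpha\theta(\alpha)$ primitive) is $w^{i}$ for a rotation $w$ of $z$, Proposition \ref{3} reduces the question to $\theta$-palindromic rotations of $z$ itself, and your index computation turns the $\theta$-palindrome condition on the rotation by $k$ into invariance of $z$ under the cyclic shift by $2k$, which primitivity (all rotations distinct, Lemma \ref{2} with exponent $1$) kills except for $k=0$ and $k=m$ --- and your arithmetic here checks out, since $\theta(z[s])=z[(2m-1-s)\bmod 2m]$ converts the rotated condition into $z[(j+2k)\bmod 2m]=z[j]$ for all $j$. What each approach buys: the paper's recycles existing machinery at the cost of a glossed ``WLOG $w$ primitive'' reduction and a somewhat delicate re-application of Lemma \ref{6}'s internal argument; yours is self-contained, makes the reduction to the primitive case rigorous via the $w^{i}$ decomposition and Proposition \ref{3}, and delivers the exact count together with an explicit identification of the two $\theta$-palindromes as $(\alpha\theta(\alpha))^{i}$ and $(\theta(\alpha)\alpha)^{i}$ (distinct because $m\geq 1$ and rotations of a primitive word are pairwise distinct). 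The only points to keep explicit are the ones you already note: the two given $\theta$-palindromes must be distinct before Lemma \ref{6} applies, and all indices are taken modulo $2m$.
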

\begin{proof}
It is clear from Example \ref{e1} that a word can have upto $2$ $\theta$-palindromes.  WLOG, we may assume that $w$ is primitive.  Suppose there are  two $\theta$-palindromic conjugates of a word $w$, then by Lemma \ref{6}, they are  of the form $uv=x\theta(x)$ and $vu=\theta(x)x$ where $x\theta(x)$ is primitive. We  show that the conjugacy class of $w$ contains no other $\theta$-palindromes. 

Consider the conjugacy class of $w = x\theta(x)$.  Let $u_1$ and  $u_2$ be such that $u_1u_2 = x\theta(x)$ and  $u_2 u_1$
is a $\theta$-palindrome. If $|u_1|\neq |u_2|$, say, $|u_1|<|u_2|$, then we apply the same argument in Lemma \ref{6} and obtain $u_1u_2 = (y \theta(y))^{2k}$
for some $y$ and
$k$. But this is impossible, because $u_1 u_2 = x\theta(x)$ is primitive. Hence, $u_1=u_2$, and
$u_2u_1 = \theta(x)x$. Thus, the conjugacy class contains exactly two $\theta$-palindromes and the words are of the structure described in Lemma \ref{6}.
%Note that the conjugacy class of a word  $x\theta(x)$
%, where $x\theta(x)$ is primitive,
%clearly contains a different palindrome $\theta(x)x$.
\end{proof}

 \section{Palindromes in the set of all Theta-conjugates of a word}\label{sec-5}
 The concept of $\theta$-conjugacy of a word was introduced in \cite{watson} to incorporate the notion of Watson-Crick involution map to the conjugacy relation. In this section, we count the number of distinct palindromes  in  $C_{\theta}(w)$, $w\in \Sigma^*$. 
We find the structure of words which have at least one palindrome in the set of their $\theta$-conjugates. We also show that if a word is a palindrome, then there can be at most two palindromes among its $\theta$-conjugates.

It is clear from Example \ref{o1}, that the words $acc$,\;$abb$ and $aaa$ have zero, one and two palindromes, respectively, in their respective conjugacy classes. We now find the structure of words with at least one palindrome among their $\theta$ conjugates.

\begin{theorem}
Given $w \in \Sigma^*$, $C_{\theta}(w)$   contains at least one palindrome iff $w = u \theta(x^R) x $ or $w = y v \theta(y^R)$ where $u, v, x, y \in \Sigma^*$  and $u, v$ are palindromes. 
\end{theorem}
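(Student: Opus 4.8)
The plan is to work throughout with the description $C_\theta(w) = \{\,\theta(s)\,r : w = rs,\ r,s\in\Sigma^*\,\}$ recorded after Proposition \ref{k1}, together with the single algebraic fact that reversal and $\theta$ commute, i.e. $(\theta(g))^R = \theta(g^R)$ for every $g\in\Sigma^*$, which follows immediately from $\theta$ being an antimorphic involution. Since a word $p$ is a palindrome exactly when $p = p^R$, a $\theta$-conjugate $p = \theta(s)\,r$ is a palindrome if and only if $\theta(s)\,r = r^R\,\theta(s^R)$, and this is the identity I would analyse in both directions.

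For the (easier) converse I would simply exhibit a palindromic $\theta$-conjugate for each prescribed shape. If $w = u\,\theta(x^R)\,x$ with $u$ a palindrome, I split $w$ at the suffix $x$ and form the $\theta$-conjugate $\theta(x)\,u\,\theta(x^R)$; a one-line reversal computation, using $u = u^R$, $(\theta(x^R))^R = \theta(x)$ and $(\theta(x))^R = \theta(x^R)$, shows it is fixed by reversal. Symmetrically, if $w = y\,v\,\theta(y^R)$ with $v$ a palindrome, I split off the prefix $y$ to obtain the $\theta$-conjugate $y^R\,\theta(v)\,y$, whose reverse is $y^R\,\theta(v^R)\,y = y^R\,\theta(v)\,y$ because $v = v^R$. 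Hence in either case $C_\theta(w)$ contains a palindrome.

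For the forward direction I start from an arbitrary palindromic $\theta$-conjugate $p = \theta(s)\,r$ with $w = rs$, so that $\theta(s)\,r = r^R\,\theta(s^R)$, and I split into the two length regimes $|r|\ge|s|$ and $|r|<|s|$, which will produce the first and second normal forms respectively. When $|r|\ge|s|$ I write $r = r_1 r_2$ with $|r_2| = |s|$; comparing the length-$|s|$ prefixes of the two sides gives $\theta(s) = r_2^R$, comparing the length-$|s|$ suffixes gives $r_2 = \theta(s^R)$, and substituting these back into the defining identity collapses it to $r_1 = r_1^R$. Then $w = rs = r_1\,r_2\,s = r_1\,\theta(s^R)\,s$ is of the form $u\,\theta(x^R)\,x$ with $u = r_1$ a palindrome and $x = s$. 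When $|r|<|s|$ I instead use the length-$|r|$ prefix comparison to write $\theta(s) = r^R m$ with $|m| = |s|-|r|$; then $\theta(s^R) = m^R r$, and substituting into the identity reduces it to $m = m^R$. Undoing $\theta$ gives $s = \theta(m)\,\theta(r^R)$, so $w = r\,\theta(m)\,\theta(r^R)$ is of the form $y\,v\,\theta(y^R)$ with $y = r$ and $v = \theta(m)$; here $v$ is a palindrome because $(\theta(m))^R = \theta(m^R) = \theta(m)$ using $m = m^R$.

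The only genuine obstacle is the bookkeeping in the forward direction: one must be certain that the prefix and suffix blocks being compared really have the asserted lengths (this is exactly what the case split on $|r|$ versus $|s|$ guarantees) and that the leftover central factor is forced to be a palindrome rather than merely constrained. The degenerate possibilities $s = \lambda$ or $r = \lambda$ fall under the first and second cases respectively, with an empty $x$ or $y$, so they require no separate treatment.
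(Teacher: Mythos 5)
Your proof is correct and follows essentially the same route as the paper's: the same palindromicity identity $\theta(s)r = r^R\theta(s^R)$, the same length-based case analysis extracting the forms $r_1\theta(s^R)s$ and $r\,\theta(m)\,\theta(r^R)$, and the same explicit witnesses $\theta(x)u\theta(x^R)$ and $y^R\theta(v)y$ for the converse. The only (harmless) difference is that you absorb the paper's separate $|u|=|v|$ case into the regime $|r|\geq|s|$ with $r_1=\lambda$, and you make explicit the identity $(\theta(g))^R=\theta(g^R)$ that the paper uses tacitly.
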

\begin{proof}
We first show the converse. Let  $w = u \theta(x^R) x$, for a palindrome $u$ such that $u,x \in \Sigma^{*}$.  Then, $\theta(x) u \theta(x^R) \in C_{\theta}(w)$ is a palindrome. Similarly, for $w = y v \theta(y^R)$ where $v, y \in \Sigma^*$  and $v$ is a palindrome, $y^R \theta(v) y \in C_{\theta}(w)$ is a palindrome. Therefore,  when $w = u \theta(x^R) x $ or $w = y v \theta(y^R)$ for palindromes $u$ and $v$, $C_{\theta}(w)$ contains at least one palindrome.
Now, let there exist at least one palindrome in $C_{\theta}(w)$. Let $w = uv$ where $u, v \in \Sigma^*$ such that $\theta(v)u$ is a palindrome in $C_{\theta}(w)$. Then, we have the following cases:
\begin{enumerate}
    \item Let $|u|<|v|$ and let $v = v_1v_2$ such that $|v_2| = |u|$. Now, $\theta(v)u = \theta(v_2) \theta(v_1) u$. Since, $\theta(v)u$ is a palindrome and $|v_2|=|u|$, we get $\theta(v_2) = u^R$ and $\theta(v_1)$ is a palindrome. Then, $$w= uv = u v_1 v_2 = u v_1 \theta(u^R).$$
    Since $\theta(v_1)$ is a palindrome, $v_1$ is a palindrome.
    \item If  $|u|=|v|$ then, $\theta(v) = u^R$, i.e., $v = \theta(u^R)$, and hence, $w = u v = u \theta(u^R)$.
    \item Let $|u|>|v|$ and let $u = u_1u_2$ such that $|u_2| = |v|$. Since, $\theta(v)u = \theta(v)u_1u_2 = u^R\theta(v^R)$ we obtain, $u_2 = \theta(v^R)$ and $u_1$ a palindrome. Then,  $w = uv = u_1 u_2 v = u_1 \theta(v^R) v$. 
\end{enumerate}
Hence, in all cases either $w = u \theta(x^R) x $ or $w = y v \theta(y^R)$ where $u, v, x, y \in \Sigma^*$  and $u, v$ are palindromes. 
\end{proof}
It is evident from the definition of $\theta$-conjugacy of a word that, $C_{\theta}(w)$ contains both $w$ and $\theta(w)$. Hence, if $w$ is a palindrome, then $C_{\theta}(w)$ contains at least two palindromes $w$ and $\theta(w)$. In the following, we show that for a palindrome $w$, $w$ and $\theta(w)$ are the only palindromes in the set of all $\theta$-conjugates of $w$.

\begin{theorem}
For a palindrome $w$, the number of palindromes in $C_{\theta}(w)$ is atmost two and is exactly two if $w\neq \theta(w)$.
\end{theorem}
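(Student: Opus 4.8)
The plan is to show that the only palindromes in $C_\theta(w)$ are $w$ and $\theta(w)$. I first record that $\theta(w)$ is itself a palindrome whenever $w$ is: since $\theta$ is an antimorphic involution, $\theta(w)^R=\theta(w^R)=\theta(w)$. Because $w,\theta(w)\in C_\theta(w)$ always, this already exhibits two palindromes, distinct precisely when $w\neq\theta(w)$. Everything then reduces to proving that no $\theta$-conjugate outside $\{w,\theta(w)\}$ is a palindrome; this single statement simultaneously gives the bound ``at most two'' and pins down the exact count (exactly two when $w\neq\theta(w)$, and exactly one when $w=\theta(w)$, since then the two coincide).

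So I would take an arbitrary palindrome $p=\theta(v)u\in C_\theta(w)$ with $w=uv$, and aim to show $p\in\{w,\theta(w)\}$. Writing the two palindrome identities $\theta(v)u=u^R\theta(v^R)$ (from $p=p^R$) and $uv=v^Ru^R$ (from $w=w^R$), I compare $p$ with $\theta(w)=\theta(v)\theta(u)$: both are palindromes of length $n=|w|$ sharing the prefix $\theta(v)$ of length $|v|$. Since a palindrome of length $n$ is determined by its prefix of length $\lceil n/2\rceil$, the case $|v|\ge|u|$ forces $p=\theta(w)$ at once. The case $|u|=|v|$ is equally quick: then $uv=v^Ru^R$ gives $v=u^R$, and feeding this into $\theta(v)u=u^R\theta(v^R)$ yields $u=\theta(u)$, hence $p=\theta(v)u=\theta(v)\theta(u)=\theta(w)$.

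The crux is the remaining case $|u|>|v|\ge 1$, where I expect to prove $u=\theta(u)$ (so again $p=\theta(w)$). Using $uv=v^Ru^R$ with $|u|>|v|$ I would peel off $u=v^Ru_2$ and deduce that $u_2$ is a palindrome. Substituting into $\theta(v)u=u^R\theta(v^R)$ collapses it to $Au_2=u_2A^R$ with $A=\theta(v)v^R$ and $p=Au_2$. Applying Lemma \ref{4} to $Au_2=u_2A^R$ produces $A=xy$, $A^R=yx$ and $u_2=(xy)^kx$; since also $A^R=(xy)^R=y^Rx^R$, comparing the blocks of $y^Rx^R=yx$ forces $x$ and $y$ to be palindromes. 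From $u=v^R(xy)^kx$ I would then compute $\theta(u)$ and verify $u=\theta(u)$. In the balanced subcase $|x|=|y|=|v|$ the factorization $A=\theta(v)v^R=xy$ gives $x=\theta(v)$, $y=v^R$, so $u=(v\theta(v))^{k+1}$, which is manifestly a $\theta$-palindrome; the general subcase follows by tracking how $A=\theta(v)v^R=xy$ interacts with reversal and $\theta$, equivalently by inducting on $|w|$ exactly as in the proof of Lemma \ref{6}, peeling a period $A$ off $u_2$ at each step.

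The main obstacle is precisely this last identification in the unbalanced subcase $|x|\neq|y|$: the prefix/suffix matching that settled $|v|\ge|u|$ no longer reaches past the middle of $p$, so one must genuinely exploit the palindromicity of $x,y$ together with $A=\theta(v)v^R$ to force $u=\theta(u)$. Once $p\in\{w,\theta(w)\}$ is established in all cases, the conclusion is immediate: $C_\theta(w)$ contains at most the two palindromes $w$ and $\theta(w)$, and these are distinct, hence exactly two, iff $w\neq\theta(w)$.
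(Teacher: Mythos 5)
Your architecture is sound and, in the easy cases, cleaner than the paper's: observing that $p=\theta(v)u$ and $\theta(w)=\theta(v)\theta(u)$ are two palindromes of length $n$ sharing the prefix $\theta(v)$, so that $|v|\ge|u|$ forces $p=\theta(w)$ outright, disposes in one stroke of what the paper handles as Cases I and II; and your uniform reduction of $|u|>|v|$ to $Au_2=u_2A^R$ with $A=\theta(v)v^R$, followed by Lemma \ref{4} and the deduction that $x,y$ are palindromes, is correct and avoids the paper's further split into $|u|\le 2|v|$ and $|u|>2|v|$. But the proof is genuinely incomplete at exactly the point you yourself flag as ``the main obstacle'': you never establish $u=\theta(u)$ (equivalently $p=\theta(w)$) in the unbalanced subcase $|x|\neq|y|$, and that subcase is the crux of the theorem --- it is where the paper does its real work (its Case III with $|u|>2|v|$: the displayed equations, an application of Lemma \ref{4} to $\theta(\alpha^R)v\theta(v^R)=v^R\theta(v)\theta(\alpha^R)$ yielding \emph{$\theta$-palindromic} $x,y$ --- note $\theta$-palindromic rather than palindromic, because the paper's equation carries a $\theta$-twist that your $Au_2=u_2A^R$ does not --- and the closing computation $w=x(yx)^{i+1}v=\theta(w')$). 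Your proposed fallback, ``inducting on $|w|$ exactly as in the proof of Lemma \ref{6},'' does not transfer: Lemma \ref{6} iterates on the equation satisfied by two $\theta$-palindromic conjugates $uv$ and $vu$, which has a different shape, and nothing in your setup produces a strictly shorter instance of the present problem, so the induction you gesture at has no hypothesis to invoke. (The paper does use such an induction, but in Theorem \ref{t1} for $\theta$-palindromes, not here.)

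The gap is reparable inside your own framework, and without induction, so it is a missing step rather than a wrong approach. From $A=xy$ and $A^R=yx$ one gets, applying $\theta$, $\theta(x)\theta(y)=v^R\theta(v)$; then $u=v^R(xy)^kx=(v^R\theta(v))^kv^Rx$ and $\theta(u)=(\theta(x)\theta(y))^k\theta(x)\theta(v^R)$, so $u=\theta(u)$ reduces to the single claim that $v^Rx$ is a $\theta$-palindrome. That claim follows by aligning the factorization $xy=\theta(v)v^R$ with $|x|$: if $|x|\le|v|$, write $\theta(v)=xz$, so $v=\theta(z)\theta(x)$; palindromicity of $y=zv^R$ gives $zv^R=vz^R$, so $z$ is a prefix of $v$, and comparing it with the prefix $\theta(z)$ of $v$ of the same length yields $z=\theta(z)$, whence $v^R x=\theta(x)z^Rx$ is fixed by $\theta$. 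If $|x|>|v|$, write $x=\theta(v)w_1$ with $v^R=w_1y$; palindromicity of $x$ forces $w_1=\theta(w_1)$ by a prefix comparison, and then $\theta(v^Rx)=w_1v\theta(y)w_1=w_1y\theta(v)w_1=v^Rx$ by direct computation using $v=yw_1^R$. Until some such argument is supplied, your submission proves the theorem only in the cases the paper itself finds easy, and the ``at most two'' bound remains open in the decisive case.
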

\begin{proof}
Let $w$ be a palindrome. Then $\theta(w)$ is also a palindrome. Suppose there exists a $w' = \theta(v)u \in C_{\theta}(w)$ where $w = uv$ such that $w'$ is a  palindrome, then  $\theta(v)u = u^R\theta(v^R)$.  We have the following cases.
\begin{itemize}
    \item [Case I:] If $|v| > |u|$, then there exists a  $v' \in \Sigma^+$ such that $ \theta(v')u = \theta(v^R)$. We then have, $\theta(u)v' = v^R$ i.e., $v= v'^R \theta(u^R)$. Now, $w=uv = u v'^R \theta(u^R)$. As $w$ is a palindrome, $u^R = \theta(u^R)$, i.e., $u = \theta(u)$. Then, $w' = \theta(v)u = \theta(v)\theta(u) = \theta(uv) = \theta(w)$.
    \item[Case II:] If $|v| = |u|$, then $\theta(v) =u^R$ and since $w$ is a palindrome, we have, $v=u^R$ which implies $v =\theta(v)$ and $u =\theta(u)$.  Thus, $w = uu^R$ and  $w' = \theta(v)u =\theta(u^R)u \theta(u^R)\theta(u) = \theta(w)$.
    \item[Case III:] If $|u|>|v|$, then $v^R$ is a prefix of $u$ and $\theta(v^R)$ is a suffix of $u$ since both $w$ and $w'$ are palindromes. If $|u|\leq 2|v|$ and since $v^R$ is a prefix of $u$ and $\theta(v^R)$ is a suffix of $u$, then $u = v_2^R v_1^R \theta(v_2^R)$ with $v=v_1v_2$ such that $|v_1| = 2|v|-|u|$ and $v_1^R=\theta(v_1^R)$.  Thus, $\theta(u) = u$, and hence, $\theta(w) = \theta(v) \theta(u) = w'$.
       If $|u|>2|v|$, then $u = v^R \alpha \theta(v^R)$ for some $\alpha \in \Sigma^+$. Since $w = uv$, $w' = \theta(v)u$ and $\theta(w)$ are palindromes, we get
        \begin{equation}\label{oneq}
        w= uv = v^R \alpha \theta(v^R)v = v^R u^R = v^R\theta(v)\alpha^Rv ;  
        \end{equation}
        \begin{equation}\label{tweq} \theta(w) = \theta(v)\theta(u) = \theta(v)v^R\theta(\alpha)\theta(v^R) = \theta(u^R)\theta(v^R) = \theta(v)\theta(\alpha^R)v\theta(v^R).
      \end{equation}
      Equations \ref{oneq} and \ref{tweq} gives, $\alpha\theta(v^R) = \theta(v)\alpha^R$ and $v^R\theta(\alpha) =\theta(\alpha^R)v $, respectively. 
            Since $w'$ is a palindrome, 
        $v^R \alpha = \alpha^R v$   
           and $\theta(w')$ is a palindrome. Now $$\theta(w') = \theta(u)v = v^R \theta(\alpha) \theta(v^R) v = \theta(\alpha^R)v \theta(v^R) v = (\theta(w'))^R = v^R \theta(v) \theta(\alpha^R) v .$$
        Then, we get, $$ \theta(\alpha^R)v \theta(v^R)  =  v^R \theta(v) \theta(\alpha^R).$$ 
            Then by Lemma \ref{4}, $v^R\theta(v) = xy$, $\theta(\alpha^R) = (xy)^ix$ and $v\theta(v^R) = yx$ where $i\geq 0$, $y\in \Sigma^*$ and $x \in \Sigma^+$. Now $v^R\theta(v) = \theta(v\theta(v^R))$, i.e., $\theta(yx) = xy$. So $x$ and $y$ are $\theta$ palindromes and hence, $$\theta(w') =  v^R \theta(\alpha) \theta(v^R) v = \theta(\alpha^R)v \theta(v^R) v = \theta(\alpha^R) yx v = (xy)^ixyxv = x(yx)^{i+1}v.$$
        Then using Equation \ref{oneq}, we get, $$ w = uv = v^R\alpha \theta(v^R)v = v^R\theta(v)\alpha^Rv = xy x(yx)^iv = x(yx)^{i+1}v = \theta(w').$$ So, $\theta(w) = w'$.
    \end{itemize}
     Hence, in all the cases, we are done.
    \end{proof}
    Consider the word $w=uu\theta(u)$ where $u$ is a palindrome but not a $\theta$-palindrome. Then, $uuu, u\theta(u)u \in C_{\theta}(w)$ are palindromes. Moreover the word  $w=u^{2i}\theta(u)^i$,  where $u$ is a palindrome but not a $\theta$-palindrome,  has at least two palindromes.
    It is  evident that there exists a non-palindrome $w$ such that $C_{\theta}(w)$ contains more than one palindrome.

     \section{Theta palindromes in the Theta-conjugacy set of a word}\label{sec-6}
\textcolor{blue}{}In this section, for a given word $w$, we study the number of $\theta$-palindromes in the set $C_{\theta}(w),\;w\in \Sigma^*$.  We find the structure of words which have at least one $\theta$-palindrome in the set of their $\theta$-conjugates. We also show that if a word is a $\theta$-palindrome, then there can be at most one $\theta$-palindrome among its $\theta$-conjugates.\\
We first give examples of  words that have zero and one $\theta$-palindrome in their $C_{\theta}(w)$ .
\begin{example}\label{ex2}
Let $\Sigma=\{a,b\}$, and consider $\theta$ such that $\theta(a)=b$ and $\theta(b)=a$ . Then
\begin{enumerate}
    \item  $C_{\theta}(aaa) = \{ aaa, baa, bba, bbb \}.$ Thus, it has zero $\theta$-palindromes .
    \item  $C_{\theta}(abab) = \{abab,aaba,abaa\}.$ The word $abab$ has exactly one $\theta$-palindrome $abab$.
    \end{enumerate}
\end{example}
We now find the structure of words with at least one palindrome among their $\theta$ conjugates.
\begin{theorem}
Given $w \in \Sigma^*$, $C_{\theta}(w)$ contains at least one $\theta$-palindrome iff $w = uxu$ or $w = xuu$ where $u, x, \in \Sigma^*$  and $x$ is a $\theta$-palindrome.
\end{theorem}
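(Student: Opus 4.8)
The plan is to translate the membership condition into a single word equation and then run a short case analysis on lengths. Observe first that, for a factorization $w = uv$, the $\theta$-conjugate $\theta(v)u \in C_\theta(w)$ is a $\theta$-palindrome precisely when $\theta(v)u = \theta(\theta(v)u)$; since $\theta$ is an antimorphic involution, $\theta(\theta(v)u) = \theta(u)\theta(\theta(v)) = \theta(u)v$. Hence $C_\theta(w)$ contains a $\theta$-palindrome if and only if $w$ admits a factorization $w = uv$ with
\[
\theta(v)u = \theta(u)v .
\]
The whole argument reduces to analyzing this equation, and I would mirror the length-based casework already used in Lemma \ref{6} and the preceding palindrome theorem.

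For the \emph{if} direction I would simply exhibit the witness. If $w = uxu$ with $x$ a $\theta$-palindrome, take $w = u\cdot(xu)$; the associated conjugate is $\theta(xu)u = \theta(u)\theta(x)u$, and $\theta\big(\theta(u)\theta(x)u\big) = \theta(u)\,x\,u = \theta(u)\theta(x)u$ since $x=\theta(x)$, so it is a $\theta$-palindrome. Symmetrically, if $w = xuu$ with $x$ a $\theta$-palindrome, the factorization $w = (xu)\cdot u$ gives the conjugate $\theta(u)xu$, which by the same computation satisfies $\theta(\theta(u)xu) = \theta(u)\theta(x)u = \theta(u)xu$. This settles the converse with no difficulty.

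For the \emph{only if} direction, assume $w = uv$ with $\theta(v)u = \theta(u)v$ and split on the relative lengths of $u$ and $v$. If $|u|=|v|$, comparing the length-$|u|$ prefixes of the two sides forces $\theta(v)=\theta(u)$, hence $u=v$ and $w=uu$, which is of the form $xuu$ with $x=\lambda$ (the empty word being a $\theta$-palindrome). If $|v|>|u|$, then matching the length-$|u|$ prefixes shows $\theta(u)$ is a prefix of $\theta(v)$; writing $\theta(v)=\theta(u)t$ and applying $\theta$ gives $v=\theta(t)u$, and substituting back cancels $\theta(u)$ on the left and $u$ on the right to leave $t=\theta(t)$, so $w = u\,\theta(t)\,u = utu$ has the form $uxu$ with $x=t$ a $\theta$-palindrome. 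If $|u|>|v|$, a symmetric suffix comparison shows $v$ is a suffix of $u$; writing $u=u_1v$ and substituting reduces the equation to $u_1=\theta(u_1)$, whence $w = u_1 v v$ has the form $xuu$ with $x=u_1$ a $\theta$-palindrome. Collecting the three cases yields exactly the two stated shapes.

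The computation is short, so the only care needed is the word-combinatorial bookkeeping in the two unequal-length cases: one must verify that $\theta(u)$ genuinely occurs as a prefix (respectively $v$ as a suffix) before factoring, check that the cancellations are legitimate, and confirm that the degenerate possibilities ($|u|=|v|$, or $t=\lambda$, or $u_1=\lambda$) still land inside the claimed forms with $x=\lambda$. I do not expect a real obstacle here; the genuine crux is the opening reformulation $\theta(v)u=\theta(u)v$, after which the three cases are routine prefix/suffix matching and cancellation.
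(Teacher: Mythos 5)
Your proposal is correct and follows essentially the same route as the paper: the same reduction of the $\theta$-palindrome condition to the equation $\theta(v)u=\theta(u)v$, the same three-way case split on $|u|$ versus $|v|$ with prefix/suffix matching, and the same explicit witnesses $\theta(xu)u$ and $\theta(u)xu$ for the converse. Your version merely states the word equation up front, which the paper leaves implicit, so there is nothing substantive to add.
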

\begin{proof}
Let there exist at least one $\theta$-palindrome in $C_{\theta}(w)$ and let $w = uv$ where $u, v \in \Sigma^*$ such that $\theta(v)u$ is a $\theta$-palindrome in $C_{\theta}(w)$. Then, we have the following cases:
\begin{enumerate}
    \item Let $|u|<|v|$. Let $v = v_1v_2$ such that $|v_2| = |u|$. Now, $\theta(v)u = \theta(v_2) \theta(v_1) u$. Since $\theta(v)u$ is a $\theta$-palindrome and $|v_2|=|u|$, we get $v_2 = u$ and $v_1$ is a $\theta$-palindrome. Then, $ w= uv = u v_1 v_2 = u v_1u$ and $v_1$ is a $\theta$-palindrome.
    \item Let $|u|=|v|$. Since $\theta(v)u$ is a $\theta$-palindrome and $|u|=|v|$, $v = u$. Then, $w = u v = uu.$ 
    \item Let $|u|>|v|$. Let $u = u_1u_2$ such that $|u_2| = |v|$. Now, $\theta(v)u = \theta(v)u_1u_2$. Since $\theta(v)u$ is a $\theta$-palindrome and $|u_2|=|v|$, $u_2 = v$ and $u_1$ is a $\theta$-palindrome. Then,  $w = uv = u_1 u_2 v = u_1v v$  where $u_1$ is a $\theta$-palindrome.
\end{enumerate}
Hence, in all the cases either $w = uxu$ or $w = xuu$ where $u, x, \in \Sigma^*$  and $x$ is a $\theta$-palindrome. \\
Conversely, let $x \in \Sigma^{*}$ be a $\theta$-palindrome. If $w = uxu$ for some $u 
\in \Sigma^*$ then, $\theta(xu) u  \in C_{\theta}(w)$ is a $\theta$-palindrome. Similarly, for $w = xxu$ where $u \in \Sigma^*$, $\theta(u)xu \in C_{\theta}(w)$ is  a $\theta$-palindrome. Therefore,  when $w = uxu $ or $w = xuu$ where $u,  x \in \Sigma^*$  and $x$ is $\theta$-palindrome, $C_{\theta}(w)$ contains at least one $\theta$-palindrome. \end{proof}
Consider $w=uv$ such that $w$ is not a $\theta$-palindrome, but  $C_{\theta}(w)$  has a $\theta$-palindrome say  $\theta(v)u$. Since, the $\theta$ conjugacy relation is not an equivalence relation,  we cannot predict the number of $\theta$-palindromes in $C_{\theta}(uv)$ using the fact the  $\theta(v)u$ is a  $\theta$-palindrome. But, for a  $\theta$-palindrome $w$, we find (Theorem \ref{t1}) the exact number of $\theta$-palindromes in the set $C_{\theta}(w)$.
\begin{theorem}\label{t1}
The set $C_{\theta}(w)$ for a $\theta$-palindrome $w$ has exactly one $\theta$-palindrome which is $w$ itself.
\end{theorem}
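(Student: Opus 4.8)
The plan is to show that every $\theta$-palindrome in $C_\theta(w)$ must coincide with $w$ itself, the one $\theta$-palindrome we already know is present (it arises from the trivial factorization $w = w\cdot\lambda$). So suppose $w' = \theta(v)u \in C_\theta(w)$ is a $\theta$-palindrome, where $w = uv$. Since $w = \theta(w)$ I get $uv = \theta(v)\theta(u)$ (call it $(*)$), and since $w' = \theta(w')$ I get $\theta(v)u = \theta(u)v$ (call it $(**)$). The key preliminary observation is that $(*)$ rewrites $w$ as $w = \theta(v)\theta(u)$, so both $w$ and $w' = \theta(v)u$ begin with the common prefix $\theta(v)$; hence $w' = w$ holds \emph{iff} $u = \theta(u)$. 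Thus the whole theorem reduces to proving that $u$ is a $\theta$-palindrome, and I would split on $|u|$ versus $|v|$.

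In the case $|u| \le |v|$ the argument is direct. By $(*)$, $u$ is a prefix of $\theta(v)$, so write $\theta(v) = us$ with $|s| = |v| - |u|$; substituting into $(*)$ gives $v = s\theta(u)$, and equating the two expressions $v = s\theta(u) = \theta(s)\theta(u)$ forces $s = \theta(s)$. Then $w' = \theta(v)u = usu$ while $\theta(w') = \theta(u)s\theta(u)$, and comparing the length-$|u|$ prefixes of the equal words $usu = \theta(u)s\theta(u)$ yields $u = \theta(u)$, as wanted.

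The hard case is $|u| > |v|$, and this is where I expect the main obstacle. Now $(*)$ makes $\theta(v)$ a prefix of $u$, so write $u = \theta(v)s$; as above one gets $s = \theta(s)$, $w = \theta(v)sv$ and $w' = \theta(v)\theta(v)s$. Palindromicity of $w'$ (i.e.\ $w' = \theta(w') = svv$) becomes the pure word equation $\theta(v)\theta(v)s = svv$, and the goal $u = \theta(u)$ is exactly $\theta(v)s = sv$. I would isolate this as an auxiliary lemma: for all $v, s$, the equation $\theta(v)\theta(v)s = svv$ implies $\theta(v)s = sv$, proved by induction on $|s|$ independently of any palindrome hypothesis on $s$. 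When $|s| \ge |v|$, comparing length-$|v|$ prefixes shows $s = \theta(v)s'$; cancelling one $\theta(v)$ reduces the equation to $\theta(v)\theta(v)s' = s'vv$ with $|s'| < |s|$, and the inductive conclusion $\theta(v)s' = s'v$ immediately gives $\theta(v)s = sv$. When $|s| < |v|$, $s$ is a prefix of $\theta(v)$, so write $\theta(v) = sr$ (hence $v = \theta(r)\theta(s)$); substituting and cancelling the leading $s$ reduces the equation to $rsrs = \theta(r)\theta(s)\theta(r)\theta(s)$, which forces $r = \theta(r)$ and $s = \theta(s)$ blockwise, and then $\theta(v)s = srs = sv$ follows. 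This is essentially a targeted instance of the Lyndon--Sch\"utzenberger equation of Lemma \ref{4}, but the self-contained induction is cleanest here.

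Once the auxiliary lemma delivers $\theta(v)s = sv$, I obtain $w' = \theta(v)\theta(v)s = \theta(v)sv = w$, so in every case $w' = w$. Hence $C_\theta(w)$ contains exactly one $\theta$-palindrome, namely $w$. The only genuinely delicate point is the case $|u| > |v|$: there the palindromicity of $w'$ does not localize to a single prefix comparison, and one really has to solve the conjugacy-type word equation $\theta(v)\theta(v)s = svv$; everything else is bookkeeping around the reduction ``$w' = w$ iff $u = \theta(u)$''.
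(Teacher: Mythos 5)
Your proof is correct, but it takes a genuinely different route from the paper's. The paper proves the theorem by strong induction on $|w|$: after the cases $|u|<|v|$ and $|u|=|v|$, the hard case $|u|>2|v|$ is handled by writing $u=\theta(v)u_1v$, showing that $u_1v$ and $\theta(v)u_1$ are both $\theta$-palindromes with $\theta(v)u_1$ a $\theta$-conjugate of $u_1v$, and then invoking the induction hypothesis of the theorem itself on the shorter word $u_1v$ to conclude $u_1v=\theta(v)u_1$. You instead avoid any induction on the statement: your reduction ``$w'=w$ iff $u=\theta(u)$'' (from the shared prefix $\theta(v)$ of $w=\theta(v)\theta(u)$ and $w'=\theta(v)u$) lets you merge the paper's Cases I and II into the single prefix-comparison argument for $|u|\le|v|$, and for $|u|>|v|$ you extract the pure word equation $\theta(v)\theta(v)s=svv$ and solve it by a self-contained descent on $|s|$. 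It is worth noting that your equation is exactly $\theta(vv)\,s=s\,(vv)$ and your conclusion is $\theta(v)\,s=s\,v$, so your auxiliary lemma is a ``square-root'' statement for the equation $\theta(z)s=sz$; it could alternatively be deduced from Proposition \ref{gg1} (which solves $uv=\theta(v)u$) or from Lemma \ref{4}, though your direct induction is cleaner and keeps the argument elementary. Your descent also subsumes the paper's subcase split $|u|\le 2|v|$ versus $|u|>2|v|$ (these correspond precisely to $|s|<|v|$ and $|s|\ge|v|$, since $|s|=|u|-|v|$), and your blockwise comparison in the base case $rsrs=\theta(r)\theta(s)\theta(r)\theta(s)$ is immediate because the block lengths align. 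What the paper's approach buys is a short inductive step that reuses the theorem as a black box; what yours buys is a reusable standalone lemma about the equation $\theta(vv)s=svv$ and a proof that never needs the theorem at smaller lengths. Both are complete; I find no gap in your argument.
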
 
\begin{proof}
We prove the statement by induction on the length of $w$. For a word of length $1$, the case is trivial. For a $\theta$-palindrome of length $2$,   say $a_1a_2$,  for $a_i\in \Sigma$, $C_{\theta}(a_1a_2)$ is $\{a_1a_2,\theta(a_2)a_1\}$. Assume that $a_1a_2$ and $\theta(a_2)a_1$ are distinct $\theta$-palindromes, then $\theta(a_2)a_1=\theta(a_1)a_2$
and $a_1a_2=\theta(a_2)\theta(a_1)$. This implies $a_1=\theta(a_2)=\theta(a_1)=a_2$ which is a contradiction. Hence, $a_1a_2=\theta(a_2)a_1$ and there is only one $\theta$-palindrome in $C_{\theta}(a_1a_2)$. We assume that for a word $\alpha$ of length less than $|w|$, if
there is a $\theta$-palindrome $\beta$
in the $C_{\theta}(\alpha)$, then $\alpha=\beta$. Let $w$ be a $\theta$-palindrome. Suppose there exist a $w'=\theta(v)u\in C_{\theta}(w)$ where $w=uv$ such that $w'$ is a $\theta$-palindrome, then  $\theta(v)u=\theta(u)v$.  We have the following cases. 
\begin{itemize}
 \item [Case I:] If $|u|<|v|$, then as
    $\theta(u)$ is a suffix of $v$ and   $u$ is the suffix of $v$, this implies that $u$ is a $\theta$-palindrome. Thus,  $w=uv=\theta(v)\theta(u)=\theta(v)u=w'$. 
    \item [Case II:] If $|u|=|v|$, then $u=\theta(v)=\theta(u)=v$, we have, $w=uv=\theta(v)u=w'$.
       \item [Case III:] If $|u|>|v|$, then 
    $\theta(v)$ is a prefix of $u$ and $v$ is a suffix of $u$ since both are $\theta$-palindromes. 
    If $|u|\leq 2|v|$ and since $\theta(v)$ is a prefix of $u$ and $v$ is a suffix of $u$, then $u=\theta(v_2)v_1v_2$ with $v=v_1v_2$ such that $|v_1|= 2|v|-|u|$ and $v_1$ is a $\theta$-palindrome. Then
   $\theta(u)=u$, and hence, $\theta(w)=\theta(v)u=w'$.     %(\textcolor{black}{ Let $|v| = l$, $|u|= n$. $v = v_1v_2$. $|v_1|=2l-n$. $|v_2| = |v|-|v_1|=l-(2l-n) = n-l$. Now $\theta(v)$ is a prefix of $u$. now $u = \theta(v_2)\theta(v_1)s$ where $s\in \Sigma^*$. Now $|s|=|u|-|v|=n-l$. As $v=v_1v_2$ is a suffix of $u$ and $|v_2| = n-l$, so $s = v_2$}).\\
If $|u|>2|v|$, then $u=\theta(v)u_1v$ where $u_1\in \Sigma^+.$ As $w$ is a $\theta$-palindrome, $uv=\theta(v)\theta(u)$, this implies $\theta(v)u_1vv=\theta(v)\theta(v)\theta(u_1)v$. We have, $u_1v=\theta(v)\theta(u_1)$.    Also, as $w'$ is a $\theta$-palindrome,   $\theta(v)u=\theta(u)v$, this implies $\theta(v)\theta(v)u_1v= \theta(v)\theta(u_1)vv$.  We have,   $\theta(v)u_1= \theta(u_1)v$. Now,  $u_1v=\theta(v)\theta(u_1)$ and $\theta(v)u_1= \theta(u_1)v$. Then, by induction hypothesis, as $u_1v$ and $\theta(v)u_1$ are both $\theta$-conjugates of $u_1v$ and are both $\theta$-palindromes, we have $u_1v=\theta(v)u_1$, thus, $$w=uv=\theta(v)u_1vv=\theta(v)\theta(v)u_1v=\theta(v)u=w'.$$
    \end{itemize}
In all the cases, we are done.
\end{proof}
   Consider the word $w=uxxuxx$ where $x,u$ are distinct $\theta$-palindromes. Then, $\theta(x)uxxux,\; \theta(uxx)uxx \in C_{\theta}(w)$. Moreover, the word $w'=(u^ix^{2i})^{2i}$ has at least two $\theta$-palindromes in the set $C_{\theta}(w')$.
    It is  evident that there exists words $w$ such that $C_{\theta}(w)$ contains more than one $\theta$ palindrome.

%\begin{theorem}\label{t2}
%Consider $w\neq \theta(w)$, then  $C_{\theta}(w)$ has at most one $\theta$-palindrome.
%\end{theorem}
%\begin{proof}
%By Example \ref{ex2}, it is clear that the set $C_{\theta}(w)$ for a non $\theta$ palindrome $w$ can have zero $\theta$-palindromes. We prove that it has at most one. 
% Let $w=uv$ be a non-$\theta$-palindrome, i.e.,
%\begin{equation}\label{e311}
 %uv\neq\theta(v)\theta(u).   
%\end{equation} Suppose a $\theta$-conjugate $w'=\theta(v)u$ is a $\theta$-palindrome, i.e.,
%\begin{equation}\label{e312}
% \theta(v)u=\theta(u)v.  \end{equation}
%We show that there is no other $\theta$-palindrome in $C_{\theta}(w)$.  We have the following cases.
%\begin{itemize}
 %   \item [Case I:] $|u|=|v|$:- Then, by Equations \ref{e312},
  %   $u=v$. Hence, $w=u^2$. Let $u=u_1u_2$, then $w=u_1u_2u_1u_2$. Since, $w$ is not a $\theta$-palindrome, $\theta(u)\neq u$, we have only the following candidates to be a $\theta$-palindrome in the conjugacy class of $w$.
  %   \begin{enumerate}
  %       \item $\theta(u_2)uu_1$:- If $\theta(u_2)uu_1$ is a $\theta$-palindrome, then $\theta(u_2)u_1u_2u_1=\theta(u_1)\theta(u_2)\theta(u_1)u_2$. Then \begin{equation}\label{y1}
        %u_2u_1=\theta(u_1)u_2.   
         %\end{equation} By Proposition \ref{p1}, we have 
        %$u_1=(\beta\alpha)^m$  and $u_2=\alpha(\beta\alpha)^n$  where $\alpha'\;\beta$ are $\theta$-palindromes and $m\geq 1,\; n\geq 0$. Then $u_2u_1$ and $u_2$ are $\theta$-palindromes.        \item $\theta(uu_2)u_1$:-%A similar argument follows as that of $\theta(u_2)uu_1$  
        %\end{enumerate}
      % \end{itemize}
%\end{proof}
\section{Conclusions}
We have studied the distribution of palindromes and $\theta$-palindromes among the conjugates and $\theta$-conjugates of a word.  We have characterized the words which have the maximum number of $\theta$-conjugates. We have found structures of words which have at least one palindrome and $\theta$-palindrome in the set of their $\theta$-conjugates. We have enumerated
palindromes and $\theta$-palindromes in the set $C_{\theta}(w)$ where $w$ is a palindrome and a $\theta$-palindrome, respectively. The maximum number of  palindromes and $\theta$-palindromes in the set of $\theta$-conjugates of a word is still unknown in general. After checking several examples, we believe that the maximum number of palindromes and $\theta$-palindromes in the set $C_{\theta}(w)$ for a word $w$ is two. We have also given examples of the words that achieve the above bound. One of our immediate future work is to prove this bound.
%\bibliographystyle{abbrv}
%\bibliography{reference.bib}

\end{document}